\documentclass[submission,copyright,creativecommons]{eptcs}

\usepackage{breakurl}
\usepackage{underscore}
\usepackage{amsmath,amsthm}
\usepackage{physics}
\usepackage{quiver}
\usepackage{graphicx}
\usepackage{enumerate}

\newcommand{\dffack}{Supported by \emph{DFF $\mid$ Natural Sciences} International Postdoctoral Fellowship 0131-00025B}
\newcommand{\epsrcack}{Supported by EPSRC Fellowship EP/R044759/1}

\newcommand{\cat}[1]{\ensuremath{\mathbf{#1}}}
\newcommand{\PInj}{\cat{PInj}}
\newcommand{\Pfn}{\cat{Pfn}}
\newcommand{\Isometry}{\cat{Isometry}}
\newcommand{\Unitary}{\cat{Unitary}}
\newcommand{\CPTP}{\cat{CPTP}}

\newcommand{\C}{\cat{C}}
\newcommand{\D}{\cat{D}}

\DeclareMathOperator{\Ext}{Ext}

\newcommand{\Lb}{\ensuremath{\mathrm{Aux}}}
\newcommand{\Rb}{\ensuremath{\mathrm{Inp}}}
\newcommand{\LRb}[1]{\Lb_\otimes(\Rb_\oplus(#1))}
\newcommand{\Inv}{\ensuremath{\mathrm{Inv}}}
\newcommand{\id}{\ensuremath{\mathrm{id}}}
\newcommand{\opp}{\ensuremath{\mathrm{op}}}

\newcommand{\tot}{\xrightarrow}
\newcommand{\fromt}{\xleftarrow}
\newcommand{\ridm}[1]{\ensuremath{\overline{#1}}}

\newtheorem{theorem}{Theorem}
\newtheorem{proposition}[theorem]{Proposition}
\newtheorem{lemma}[theorem]{Lemma}
\newtheorem{corollary}[theorem]{Corollary}
\theoremstyle{definition}
\newtheorem{definition}[theorem]{Definition}

\newtheorem{remark}[theorem]{Remark}
\makeatletter
\newtheorem*{rep@theorem}{\rep@title}
\newcommand{\newreptheorem}[2]{%
\newenvironment{rep#1}[1]{%
 \def\rep@title{#2 \ref{##1}}%
 \begin{rep@theorem}}%
 {\end{rep@theorem}}}
\makeatother
\newreptheorem{proposition}{Proposition}

\newif\ifqpl
\newif\ifarxiv

\qpltrue

\title{Bennett and Stinespring, Together at Last}
\author{Chris Heunen\footnote{\epsrcack} \quad\quad Robin Kaarsgaard\footnote{\dffack}
\institute{University of Edinburgh}
\email{\quad chris.heunen@ed.ac.uk \quad\quad robin.kaarsgaard@ed.ac.uk}
}

\begin{document}
\maketitle


\begin{abstract}
  We present a universal construction that relates reversible dynamics on open systems to arbitrary dynamics on closed systems: the restriction affine completion of a monoidal restriction category quotiented by well-pointedness.
  This categorical completion encompasses both quantum channels, via Stinespring dilation, and classical computing, via Bennett's method.
  Moreover, in these two cases, we show how our construction can be essentially `undone' by a further universal construction.
  This shows how both mixed quantum theory and classical computation rest on entirely reversible foundations.
\end{abstract}

\section{Introduction}
\label{sec:introduction}

Two constructions relate reversible dynamics on open systems to arbitrary dynamics on closed systems:
\begin{itemize}
  \item \emph{Stinespring dilation} realises a quantum channel as a reversible process on a larger space~\cite{stinespring:positive}.
  \item \emph{Bennett's method} makes a classical computer program reversible by allowing extra output~\cite{bennett:logicalrev,soekenetal:embedding}.
\end{itemize}
This paper presents a universal categorical construction encompassing both, making precise how the relationship between pure and mixed quantum theory resembles the relationship between reversible and conventional classical computation.

The construction has three phases: allowing additional constant input, leakage of output, and making it extensional. 
The first two phases adjoin auxiliary systems to the processes in question. The ancilla input can be seen as a form of temporary storage, while the output ancilla is not considered part of the desired output, and therefore is sometimes called \emph{garbage}. However, the garbage cannot be discarded without altering the function.
The third phase of the construction ensures that at least the garbage is \emph{extensional} (specific to the \emph{map} being computed rather than the \emph{method} used to compute it), so that equality of morphisms is judged solely on their observable input-output behaviour.


We can also go in the converse direction by taking the cofree inverse category. All four phases have universal properties. On the whole, this shows how both mixed quantum theory and classical computation rest on entirely reversible foundations.


\[\begin{tikzpicture}
  \node at (0,2.3) {\underline{Reversible dynamics on open systems}};
  \node at (7,2.3) {\underline{Arbitrary dynamics on closed systems}};
  \node[text width=3cm,align=center] (tl) at (0,1.3) {partial injections between sets};
  \node[text width=3cm,align=center] (tr) at (7,1.3) {partial functions between sets};
  \node[text width=45mm,align=center] (r) at (7,-0.5) {completely positive trace-preserving maps between fin-dim Hilbert spaces};
  \node[text width=35mm,align=center] (bl) at (0,-0.5) {unitaries between fin-dim Hilbert spaces};
  \draw[|->] ([yshift=1mm]tl.east) to node[above]{$\Ext \circ \Lb \circ \Rb$} ([yshift=1mm]tr.west);
  \draw[|->] ([yshift=-1mm]tr.west) to node[below]{$\Inv$} ([yshift=-1mm]tl.east);
  \draw[|->] ([yshift=1mm]bl.east) to node[above]{$\Ext \circ \Lb \circ \Rb$} ([yshift=1mm]r.west);
  \draw[|->] ([yshift=-1mm]r.west) to node[below]{$\Inv$} ([yshift=-1mm]bl.east);
\end{tikzpicture}\]

There are some idiosyncracies among the four phases.
The $\Inv$-construction recovers partial injections from partial functions exactly, but only recovers unitaries from completely positive trace-preserving maps up to a global phase.
The $\Rb$-construction leaves the category of partial injections invariant,
whereas it turns unitaries into isometries.
The $\Ext$-construction leaves the category of completely positive trace-preserving maps invariant, because minimal Stinespring dilations exist.
That is, Stinespring dilation allows an extensional choice of auxiliary system, whereas reversibilising embeddings are intensional. There are several (canonical) methods to make irreversible programs reversible. For example, Bennett's method stores the input and returns it in full along with the output, while the \emph{Landauer
embedding}~\cite{axelsenglueck:whatdo,landauer:irreversibility} additionally returns a trace of all instructions and attendant intermediate states.

\textbf{Related work}\quad
Both Stinespring dilation and Bennett's method have seen categorical presentations. Despite the similarity of their statements, these categorical completions are surprisingly dissimilar.
The universal construction of completely positive trace-preserving maps from isometries and unitaries is due to Huot and Staton~\cite{huotstaton:universal,huotstaton:completion}.
A different categorical approach to Stinespring's dilation
theorem as a universal construction is given by Westerbaan and
Westerbaan~\cite{westerbaan:paschke}.
The equivalence of discrete cartesian restriction categories and discrete inverse categories is due to Giles~\cite{giles:thesis}, though later recast by Comfort~\cite{comfort:zxamp} as a counital completion of inverse categories with chosen semi-Frobenius algebras.
Our $\Lb$-construction generalises a result by Hermida and Tennent ~\cite{hermidatennent:indeterminates}.
Combining it with our $\Ext$-construction gives the well-pointed completion of a monoidal restriction category that generalises both Huot-Staton and Giles.

\textbf{Future work}\quad
Following Giles, we conjecture that there is an equivalence between a category of certain monoidal inverse categories and certain well-pointed monoidal restriction categories.
Another interesting question is whether there is a minimal set that can be adjoined to any partial function to make it injective. Such a minimal Bennett embedding, as the miniminal Stinespring dilation, could be used to measure the degree to which a map is reversible. It may relate to the
information theoretic characterisation of reversible maps as those that
preserve entropy~\cite{landauer:irreversibility}.

\textbf{Overview}\quad
We assume familiarity with basic category theory. 
Section~\ref{sec:categorical-background} briefly recalls restriction categories and inverse categories. 
In Section~\ref{sec:the-bar-l-construction}, we present the $\Lb$-construction and show that it is the affine completion of a restriction monoidal category. 
Next, Section~\ref{sec:extensionality} introduces the $\Ext$-construction, and shows that it is governed by a universal property.
The constructions are put to work in Section~\ref{sec:examples} by showing that $\Ext \circ \Lb$ completes isometries to quantum channels and partial injective functions to partial functions. 
In Section~\ref{sec:cofree-reversible-foundations}, we use the \emph{dual} $\Rb$ of the $\Lb$-construction to show how quantum channels and partial functions can be universally constructed from unitaries and partial injections, respectively, and further that the latter can be recovered from the former by the $\Inv$-construction.
Appendix~\ref{sec:proofs} holds proofs that would distract in the main body of the article.

\section{Restriction categories and inverse categories}
\label{sec:categorical-background}

While we assume basic familiarity with category theory, and in particular
monoidal categories~\cite{heunenvicary:cqt}, we briefly summarise restriction categories and inverse categories, which is relatively less well-known.
Restriction categories~\cite{cockettlack:restcat1} axiomatise partially defined morphisms. The idea is to record for each morphism $f$ its \emph{restriction idempotent} $\ridm{f}$, a partial identity defined precisely where $f$ is defined. 

\begin{definition}\label{def:restrictioncategory}
  A \emph{restriction category} is a category equipped with a choice of endomorphism $\ridm{f} \colon A \to A$ for each morphism $f \colon A \to B$ satisfying:
  \begin{enumerate}[(i)]
    \item $f \circ \ridm{f} = f$;
    \item $\ridm{f} \circ \ridm{g} = \ridm{g} \circ \ridm{f}$;
    \item $\ridm{g \circ \ridm{f}} = \ridm{g} \circ \ridm{f}$;
    \item $\ridm{g} \circ f = f \circ \ridm{g \circ f}$.
  \end{enumerate}
\end{definition}

The restriction idempotent $\ridm{f}$ measures `how partial' $f$ is. If $\ridm{f} = \id$, we call $f$ \emph{total}.
Any category becomes a restriction category when endowed with the trivial choice $\ridm{f}=\id$, but many other choices may be possible. 
When working with a restriction category, we often leave implicit which choice is made, just like the choice of tensor product making a category monoidal.
When we speak of the following categories, we will use the trivial restriction structure: \Unitary{} has finite-dimensional Hilbert spaces as objects and unitary linear maps as morphisms; \Isometry{} has finite-dimensional Hilbert spaces as objects and isometric linear maps as morphisms; \CPTP{} has finite-dimensional Hilbert spaces as objects and completely positive trace-preserving maps as morphisms.

But there are also nontrivial choices of restriction structure. On the category \Pfn{} of sets and partial functions, we will choose the restriction idempotent of a partial function $f \colon A \to B$ as follows:
\[
  \ridm{f}(x) = \begin{cases}
    x & \text{if $f$ is defined at $x$} \\
    \text{undefined} & \text{otherwise}
  \end{cases}
\]
Thus a partial function $f$ is total in the usual sense precisely when
it is total in the abstract sense.

A functor $F \colon \cat{C} \to \cat{D}$ between restriction categories is a \emph{restriction functor} when $F(\ridm{f}) = \ridm{F(f)}$. 
A (symmetric) \emph{monoidal restriction category} is a restriction category which that is also (symmetric) monoidal, such that the monoidal product is a restriction bifunctor: $\ridm{f \otimes g} = \ridm{f} \otimes \ridm{g}$. 

Similarly, restriction limits and colimits are ones that respect the restriction structure, though especially limits tend to be quite different. A \emph{restriction terminal object} is an object $1$ such that each object $A$ allows a unique \emph{total} morphism $A \to 1$. Restriction terminal objects need not be terminal in the usual sense; for
example, any singleton set is restriction terminal but \emph{not} terminal in \Pfn, because there is (at least) also the nowhere defined function $A \to 1$.

\begin{lemma}\cite{cockettlack:restcat1}
  For all appropriate $f$ and $g$ in a restriction category:
  \begin{enumerate}[(i)]
    \item $\ridm{g \circ f} = \ridm{\ridm{g} \circ f}$;
    \item $\ridm{g \circ f} = \ridm{f}$ if $g$ is total;
    \item $\ridm{f} = \id$ if $f$ is invertible.
  \end{enumerate}
\end{lemma}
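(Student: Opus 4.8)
The plan is to derive all three clauses directly from the four restriction axioms of Definition~\ref{def:restrictioncategory}, proving them in the order (i), (ii), (iii): clause (i) is the substantive one, (ii) falls out of (i) for free, and (iii) is handled on its own by a one-line calculation. No monoidal structure or any of the later machinery is needed; this is pure axiom-chasing.

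For clause (i) I would first isolate the one genuinely useful auxiliary fact, namely that the restriction of a composite \emph{absorbs} the restriction of its right factor: $\ridm{g \circ f} \circ \ridm f = \ridm{g \circ f}$. This holds because $f \circ \ridm f = f$ by axiom (i), so $g \circ f \circ \ridm f = g \circ f$; taking restrictions of both sides and then pulling $\ridm f$ back out through axiom (iii) (with $g \circ f$ in the role of ``$g$'') gives $\ridm{g \circ f} \circ \ridm f = \ridm{g \circ f}$, which axiom (ii) rewrites as $\ridm f \circ \ridm{g \circ f} = \ridm{g \circ f}$. On the other side, axiom (iv) rewrites $\ridm g \circ f = f \circ \ridm{g \circ f}$; applying the restriction to both sides and invoking axiom (iii) once more (this time with $f$ in the role of ``$g$'') yields $\ridm{\ridm g \circ f} = \ridm f \circ \ridm{g \circ f}$. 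Comparing the two expressions shows $\ridm{\ridm g \circ f} = \ridm f \circ \ridm{g \circ f} = \ridm{g \circ f}$, which is exactly (i).

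Clause (ii) is then immediate: if $g$ is total, then $\ridm g = \id$, so $\ridm g \circ f = f$, and substituting into clause (i) gives $\ridm{g \circ f} = \ridm{\ridm g \circ f} = \ridm f$. For clause (iii), if $f$ is invertible I would simply compose axiom (i), $f \circ \ridm f = f$, on the left with $f^{-1}$: the left-hand side collapses via $f^{-1} \circ f = \id$ to $\ridm f$, while the right-hand side becomes $f^{-1} \circ f = \id$, forcing $\ridm f = \id$.

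The only step that takes any thought is the absorption identity underlying clause (i); everything after it is formal substitution. The mild subtlety there is recognising that axiom (iii) is precisely the lever that converts ``the restriction of something post-composed with a restriction idempotent'' into a product of restriction idempotents, and that this is what lets both $\ridm{g \circ f}$ and $\ridm{\ridm g \circ f}$ be massaged into the common form $\ridm f \circ \ridm{g \circ f}$. I do not expect to need idempotence of $\ridm{\cdot}$ or any fixed-point property such as $\ridm{\ridm f} = \ridm f$ for this particular lemma.
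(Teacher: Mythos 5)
Your proof is correct, and since the paper itself gives no proof of this lemma (it simply cites Cockett--Lack), your axiom-chasing argument is exactly the standard one found in that reference: the absorption identity $\ridm{g \circ f} \circ \ridm{f} = \ridm{g \circ f}$ via axioms (i) and (iii), then axiom (iv) plus (iii) to equate both sides with $\ridm{f} \circ \ridm{g \circ f}$. All three clauses check out, including the observation that neither idempotence of $\ridm{f}$ nor $\ridm{\ridm{f}} = \ridm{f}$ is needed.
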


A morphism $f \colon A \to B$ in a restriction category is a
\emph{partial isomorphism} if there is a morphism $f^\circ \colon B \to A$ such that $f^\circ \circ f = \ridm{f}$ and $f \circ f^\circ = \ridm{f^\circ}$. Such partial inverses are unique whenever they exist. In \Pfn,
the partial isomorphisms are precisely the partial injective functions.

Recall that in a \emph{dagger category}, every morphism $f \colon A \to B$ has a partner $f^\dag \colon B \to A$ such that $f^{\dag\dag}=f$, $\id^\dag = \id$, and $(g \circ f)^\dag = f^\dag \circ g^\dag$~\cite{heunenvicary:cqt}. 

\begin{proposition}\cite{cockettlack:restcat1}
  The following are equivalent:
  \begin{enumerate}[(i)]
    \item \C{} is a restriction category in which each morphism is a partial isomorphism;
    \item \C{} is an \emph{inverse category}: a dagger category with $f \circ f^\dagger \circ f = f$ and $f^\dagger \circ f \circ g^\dagger \circ g = g^\dagger \circ g \circ f^\dagger \circ f$.
  \end{enumerate}
\end{proposition}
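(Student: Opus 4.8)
The plan is to prove both implications by transporting the extra structure explicitly, relying throughout on the fact that partial inverses, when they exist, are unique.

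For $(i)\Rightarrow(ii)$, I would define a dagger by $f^\dagger := f^\circ$, the partial inverse guaranteed by the hypothesis. Involutivity $f^{\dagger\dagger}=f$ and normalisation $\id^\dagger=\id$ are immediate from uniqueness of partial inverses, since $f$ (respectively $\id$) serves as partial inverse to its own partial inverse. The two inverse-category equations then fall out of the restriction axioms once we record that $f^\dagger\circ f = \ridm{f}$: the equation $f\circ f^\dagger\circ f = f$ is axiom (i), and $f^\dagger\circ f\circ g^\dagger\circ g = \ridm{f}\circ\ridm{g} = \ridm{g}\circ\ridm{f} = g^\dagger\circ g\circ f^\dagger\circ f$ is the commutativity axiom (ii). The one dagger axiom requiring genuine work is contravariant functoriality $(g\circ f)^\dagger = f^\dagger\circ g^\dagger$, which I would establish by checking that $f^\circ\circ g^\circ$ satisfies the two defining equations of the partial inverse of $g\circ f$ and then invoking uniqueness.

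For $(ii)\Rightarrow(i)$, conversely, I would define $\ridm{f}:=f^\dagger\circ f$ and propose $f^\circ:=f^\dagger$ as the partial inverse. A short calculation shows $\ridm{f}$ is a self-adjoint idempotent: idempotency follows by precomposing $f\circ f^\dagger\circ f=f$ with $f^\dagger$, and self-adjointness from $(f^\dagger\circ f)^\dagger = f^\dagger\circ f^{\dagger\dagger} = f^\dagger\circ f$. The partial-isomorphism requirement then holds by construction, since $f^\dagger\circ f = \ridm{f}$ and $f\circ f^\dagger = f^{\dagger\dagger}\circ f^\dagger = \ridm{f^\dagger}$. Restriction axiom (i) is again the first inverse equation, and axiom (ii) is the second.

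The crux of the whole proposition, and the step I expect to be most delicate, is the interaction between composition and the restriction (respectively dagger) structure: restriction axioms (iii) and (iv) on one side, and contravariant functoriality of the dagger on the other. These are two faces of the same fact, and I would prove each by the uniqueness-of-partial-inverses argument, carefully manipulating expressions of the form $\ridm{g}\circ f$ and $\ridm{g\circ f}$ with the help of the earlier lemma (in particular $\ridm{g\circ f}=\ridm{\ridm{g}\circ f}$). Everything else reduces to routine bookkeeping with the self-adjoint idempotents $\ridm{f}$.
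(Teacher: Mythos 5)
Your proposal is correct, but note that the paper itself gives no proof of this proposition --- it is quoted from Cockett and Lack --- so the comparison is with the standard literature argument, which is essentially what you reconstruct: dagger $:=$ unique partial inverse in one direction, $\ridm{f} := f^\dagger \circ f$ in the other, with restriction axioms (i) and (ii) exchanging exactly against the two inverse-category equations. One imprecision in your final paragraph is worth flagging. In the direction (ii)$\Rightarrow$(i), restriction axioms (iii) and (iv) are \emph{not} obtained by a uniqueness-of-partial-inverses argument (at that point you only have the dagger, and uniqueness is not yet in play); they are direct computations, and the key step your sketch leaves implicit is to apply the commutation axiom with $f^\dagger$ in place of $f$, so that the ``range'' idempotents $f \circ f^\dagger = f^{\dagger\dagger} \circ f^\dagger$ also commute with the ``domain'' idempotents $g^\dagger \circ g$. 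With that observation axiom (iv) falls out as
\[
  f \circ \ridm{g \circ f} \;=\; f \circ f^\dagger \circ g^\dagger \circ g \circ f \;=\; g^\dagger \circ g \circ f \circ f^\dagger \circ f \;=\; g^\dagger \circ g \circ f \;=\; \ridm{g} \circ f,
\]
and axiom (iii) is a similar computation using self-adjointness of $f^\dagger \circ f$ together with $f \circ f^\dagger \circ f = f$. Conversely, in the direction (i)$\Rightarrow$(ii), your uniqueness argument for $(g \circ f)^\dagger = f^\circ \circ g^\circ$ does go through, but it needs, besides restriction axiom (iv), the standard consequence $\ridm{f} \circ \ridm{g \circ f} = \ridm{g \circ f}$ of the axioms: then $(f^\circ \circ g^\circ) \circ (g \circ f) = f^\circ \circ \ridm{g} \circ f = f^\circ \circ f \circ \ridm{g \circ f} = \ridm{f} \circ \ridm{g \circ f} = \ridm{g \circ f}$, and symmetrically for the other defining equation. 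These are fixable details in an otherwise sound and correctly structured plan, not gaps.
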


Inverse categories were originally conceived as a categorical extension of inverse semigroups~\cite{kastl:inverse}, but have recently seen applications as categorical models of classical reversible computation~\cite{giles:thesis, kaarsgaardaxelsenglueck:recursion, glueckkaarsgaard:flowcharts, gluecketal:revsem}. 
Examples of inverse categories include the category \PInj{} of sets and partial injective functions, as well as any groupoid (such as \Unitary). The connection between restriction and inverse categories generalises that between mere categories and groupoids.

\begin{proposition}\cite{kaarsgaardaxelsenglueck:recursion}\label{prop:cofreeinverse}
  The wide subcategory $\Inv(\C)$ of all partial isomorphisms of a (monoidal) restriction category $\cat{C}$ is its cofree (monoidal) inverse category: any inverse category $\cat{D}$ with a
  (strict monoidal) functor $\cat{D} \to \cat{C}$ allows a unique (strict
  monoidal) functor $\cat{D} \to \Inv(\cat{C})$ making the following diagram commute:
  \[\begin{tikzcd}
  	{\mathbf{D}} \\
  	{\mathrm{Inv}(\mathbf{C})} & {\mathbf{C}}
  	\arrow[from=2-1, to=2-2]
  	\arrow[dotted, from=1-1, to=2-1]
  	\arrow[from=1-1, to=2-2]
  \end{tikzcd}\]
\end{proposition}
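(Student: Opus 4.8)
The plan is to proceed in two stages: first to verify that $\Inv(\C)$ is genuinely an inverse category with a restriction-functor inclusion into $\C$, and then to observe that the defining hypothesis on an inverse category $\D$ forces any restriction functor $\D \to \C$ to land in partial isomorphisms, which yields both existence and uniqueness of the factorisation almost for free. Throughout I take the functors $\D \to \C$ to be restriction functors, as is standard for this 2-category of restriction categories; this is exactly what is needed for $G$ to transport the inverse structure of $\D$.

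First I would check that the partial isomorphisms of $\C$ form a wide subcategory inheriting the restriction structure. Identities are self-inverse with $\ridm{\id} = \id$, so they are partial isomorphisms. For composition, given partial isos $f \colon A \to B$ and $g \colon B \to C$ with partial inverses $f^\circ$ and $g^\circ$, I would show that $f^\circ \circ g^\circ$ is a partial inverse of $g \circ f$ by verifying $(f^\circ \circ g^\circ) \circ (g \circ f) = \ridm{g \circ f}$ and $(g \circ f) \circ (f^\circ \circ g^\circ) = \ridm{f^\circ \circ g^\circ}$. This is the one genuinely computational step: it uses the restriction axioms — in particular axiom (iv), $\ridm{g} \circ f = f \circ \ridm{g \circ f}$, to slide restriction idempotents past morphisms, together with the preceding Lemma — and is where the bookkeeping concentrates. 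Once closure is established, $\Inv(\C)$ is a restriction category (inheriting $\ridm{\,\cdot\,}$ from $\C$) in which every morphism is a partial isomorphism, so the preceding Proposition identifies it as an inverse category, and the inclusion $I \colon \Inv(\C) \to \C$ is evidently a restriction functor.

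Next I would treat the universal property, where the key observation is that a restriction functor out of an inverse category automatically factors through partial isomorphisms. Let $\D$ be an inverse category and $G \colon \D \to \C$ a restriction functor. Every morphism $f$ of $\D$ is a partial isomorphism with inverse $f^\circ = f^\dagger$; applying $G$ and using $G(\ridm{f}) = \ridm{G(f)}$ gives $G(f^\circ) \circ G(f) = G(\ridm{f}) = \ridm{G(f)}$ and $G(f) \circ G(f^\circ) = \ridm{G(f^\circ)}$, so $G(f)$ is a partial isomorphism in $\C$ with inverse $G(f^\circ)$. Hence $G$ corestricts to a functor $\tilde{G} \colon \D \to \Inv(\C)$ agreeing with $G$ on objects and morphisms, with $I \circ \tilde{G} = G$ by construction; that $\tilde{G}$ is again a restriction functor follows because $I$ reflects restriction idempotents.

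Finally, uniqueness is immediate: $I$ is the identity on objects and injective on morphisms, so any $\tilde{G}$ with $I \circ \tilde{G} = G$ is forced to coincide with $G$ on the underlying data. For the monoidal refinement I would check that $\Inv(\C)$ is closed under the tensor, using the restriction bifunctor law $\ridm{f \otimes g} = \ridm{f} \otimes \ridm{g}$ to verify that $f^\circ \otimes g^\circ$ is a partial inverse of $f \otimes g$; the coherence isomorphisms of $\C$ are ordinary isomorphisms, hence partial isomorphisms, so they too lie in $\Inv(\C)$, making it a monoidal restriction category, and a strict monoidal $G$ yields a strict monoidal $\tilde{G}$ since the two agree on all structure. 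The only real obstacle is the closure-under-composition computation of the first stage; everything afterwards is a matter of transporting the inverse-category hypothesis on $\D$ along $G$ and reading off the forced factorisation.
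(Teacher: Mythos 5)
The paper itself offers no proof of this proposition---it is imported verbatim from \cite{kaarsgaardaxelsenglueck:recursion}---so there is no in-paper argument to compare against; judged on its own terms, your proof is correct and follows the standard route: closure of partial isomorphisms under identities, composition and tensor, the Cockett--Lack characterisation to recognise $\Inv(\C)$ as an inverse category, the observation that a restriction functor out of an inverse category automatically lands in partial isomorphisms, and uniqueness forced by the inclusion being faithful and bijective on objects. One point deserves emphasis: your decision to read ``functor'' as ``restriction functor'' is not a mere convention but is \emph{necessary}, since the statement is false for arbitrary functors. For a counterexample, view the two-element inverse monoid $\{1,e\}$ (with $e$ idempotent and $e^\dagger = e$) as a one-object inverse category and map it to $\Pfn$ by sending the object to $\{1,2\}$, the unit to the identity, and $e$ to the total idempotent $g$ with $g(1)=g(2)=1$; this is a functor, but $g$ is not a partial injection, so it admits no factorisation through $\Inv(\Pfn)=\PInj$. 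The paper's phrasing leaves this hypothesis implicit (it is explicit in the cited source), and your proof correctly supplies it: preservation of restriction idempotents is exactly what makes $G(f^\dagger)\circ G(f)=G(\ridm{f})=\ridm{G(f)}$ go through, which is the heart of the factorisation argument.
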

If $\C$ in the above is a trivial restriction category, then $\Inv(\C)$ is its \emph{core}, that is, its cofree groupoid.

\section{The \Lb-construction}
\label{sec:the-bar-l-construction}

This section is dedicated to the $\Lb$-construction, a generalisation of Hermida and Tennent's construction~\cite{hermidatennent:indeterminates} to (symmetric monoidal) restriction categories. After introducing $\Lb(\C)$, we show step by step that it is an affine monoidal restriction category.
Here, a monoidal restriction category is \emph{affine} when its tensor unit $I$ is restriction terminal. The crowning theorem shows that $\Lb(\C)$ is in fact the restriction affine completion of $\C$.

\begin{definition}
  Define a relation $\triangleright$ on the morphisms of a symmetric monoidal restriction category as follows. 
  For $f \colon A \to B \otimes E$ and $f' \colon A \to B \otimes E'$, set $f \triangleright f'$ if and only if $\ridm{f} = \ridm{f'}$ and there is a \emph{mediator} $h \colon E \to E'$ making the triangle commute:
  \[\begin{tikzcd}[row sep=5mm]
  	& A \\
  	{B \otimes E} && {B \otimes E'}
  	\arrow["{f'}", from=1-2, to=2-3]
  	\arrow["f"', from=1-2, to=2-1]
  	\arrow["{\id \otimes h}"', from=2-1, to=2-3]
  \end{tikzcd}\]
\end{definition}

This is a preorder: reflexivity follows by mediating with identities; transitivity follows by composing mediators. 
However, the relation need not be symmetric, for example if $\dim(E) < \dim(E')$ in \Isometry{}. 

\begin{definition}\label{def:equiv}
  Write $\sim$ for the equivalence relation generated by $\triangleright$. Explicitly, for $f \colon A \to B \otimes E$ and $f' \colon A \to B \otimes E'$, we have $f \sim f'$ if and only if there are intermediate morphisms $f_1, \dots, f_{n-1}$  with $\ridm{f} =
  \ridm{f_1} = \cdots = \ridm{f_{n-1}} = \ridm{f'}$ and mediators $E \tot{h_1} E_1 \fromt{h_2} E_2 \tot{h_3} \cdots \fromt{h_n} E'$ making the following diagram commute:
  \[\begin{tikzcd}
  	&& A \\
  	{B \otimes E} & {B \otimes E_1} & {B \otimes E_2} & \cdots & {B \otimes E'}
  	\arrow["{\id \otimes h_1}"', from=2-1, to=2-2]
  	\arrow["{\id \otimes h_2}", from=2-3, to=2-2]
  	\arrow["{\id \otimes h_3}"', from=2-3, to=2-4]
  	\arrow["{\id \otimes h_n}", from=2-5, to=2-4]
  	\arrow["{f'}", from=1-3, to=2-5]
  	\arrow["f"', from=1-3, to=2-1]
  	\arrow["{f_2 \quad \dots}", from=1-3, to=2-3]
  	\arrow["{f_1}", from=1-3, to=2-2]
  \end{tikzcd}\]
\end{definition}


\begin{definition}
  For a symmetric monoidal restriction category \C, define a category
  $\Lb(\C)$:
  \begin{itemize}
    \item objects are those of \C;
    \item morphisms $[f,E] \colon A \to B$ are $\sim$-equivalence classes of morphisms $f : A \to B \otimes E$ in \C;
    \item composition of $[f,E] \colon A \to B$ and $[g,E'] \colon B \to C$ is $[\alpha \circ (g \otimes \id) \circ f, E' \otimes E] \colon A \to C$;
    \item identities are $[\rho^{-1},I] \colon A \to A$.
  \end{itemize}
\end{definition}

The previous definition differs from~\cite{hermidatennent:indeterminates} only by the additional requirement that $\ridm{f} = \ridm{f'}$ if $f \sim f'$.
It follows that the two are the same when \C{} is a trivial restriction category, making $\Lb$ a genuine generalisation.

\begin{remark}
  Morphisms in $\Lb(\C)$ are often given by composing chains of morphisms in 
  $\C$, further quotiented by a nontrivial equivalence relation.
  This can quickly become unintelligible.
  Therefore we will always make the zig-zag path of mediators from 
  Definition~\ref{def:equiv} explicit in equivalence arguments.
  To indicate which part of a diagram in \C{} corresponds to which morphism in 
  $\Lb(C)$, we will use squiggly grey `ghost' arrows:
  \[\begin{tikzcd}
  	A & {B \otimes E} & {(C \otimes E') \otimes E} \\
  	&& {C \otimes (E' \otimes E)}
  	\arrow["f", from=1-1, to=1-2]
  	\arrow["{g \otimes \id}", from=1-2, to=1-3]
  	\arrow["\alpha", from=1-3, to=2-3]
  	\arrow["{[g,E'] \circ [f,E]}"', color={rgb,255:red,128;green,128;blue,128}, squiggly, from=1-1, to=2-3]
  \end{tikzcd}\]
  This ghost arrow is \emph{not} a part of the commutative diagram. It merely indicates that $\alpha \circ (g \otimes \id) \circ f$
  corresponds precisely to $[g,E'] \circ [f,E]$ in $\Lb(\C)$.
\end{remark}

Notation settled, we now set out to show that this actually defines a restriction symmetric monoidal category. 
We proceed in three steps: first we show that it is a category; then that it inherits a restriction structure; and finally that it inherits a symmetric monoidal structure in a way that respects restriction. 
The proofs of the following three propositions are deferred to Appendix~\ref{sec:proofs} as they would distract from the main development.

\begin{proposition}\label{prop:LbCcat}
  $\Lb(\C)$ is a category.
\end{proposition}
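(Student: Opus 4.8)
The plan is to verify the three defining properties of a category: that composition descends to $\sim$-equivalence classes, that $[\rho^{-1},I]$ is a two-sided identity, and that composition is associative. The only genuinely delicate point is well-definedness of composition; once that is secured, the unit and associativity laws reduce to coherence.

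For well-definedness I would first note that, since $\sim$ is generated by $\triangleright$ and $\triangleright \subseteq {\sim}$ with $\sim$ an equivalence relation, it suffices to show that $\triangleright$ is a congruence for composition one side at a time: if $f \triangleright f'$ then $[g,E']\circ[f,E]$ is $\triangleright$-related to $[g,E']\circ[f',E'']$, and symmetrically if $g \triangleright g'$. Concretely, suppose $f \triangleright f'$ with mediator $h \colon E \to E''$, so $f' = (\id_B \otimes h)\circ f$. I claim the two composites are again related by $\triangleright$, this time with mediator $\id_{E'} \otimes h$ on the garbage. Commutativity of the mediating triangle is a direct chase: naturality of the associator rewrites $(\id_C \otimes (\id_{E'}\otimes h))\circ\alpha$ as $\alpha\circ(\id_{C\otimes E'}\otimes h)$, and bifunctoriality of $\otimes$ then slides $h$ past $g\otimes\id$ and absorbs it into $f$ via $f' = (\id_B\otimes h)\circ f$. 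The dual check when $g$ varies is entirely parallel, using the mediator $k\otimes\id_E$.

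The feature that distinguishes this from Hermida and Tennent's construction, and which I expect to be the main obstacle, is the extra clause of Definition~\ref{def:equiv} requiring that $\triangleright$ preserve restriction idempotents: I must show $\ridm{\alpha\circ(g\otimes\id_E)\circ f} = \ridm{\alpha\circ(g\otimes\id_{E''})\circ f'}$. Since $\alpha$ is total, this reduces (by the law $\ridm{g\circ f}=\ridm{\ridm{g}\circ f}$) to comparing $\ridm{(\ridm{g}\otimes\id_E)\circ f}$ with $\ridm{(\ridm{g}\otimes\id_{E''})\circ f'}$. The crux is the hypothesis $\ridm{f}=\ridm{f'}$: applying axiom~(iv) of Definition~\ref{def:restrictioncategory} with $f' = (\id_B\otimes h)\circ f$ yields
\[
  (\id_B\otimes\ridm{h})\circ f \;=\; f\circ\ridm{(\id_B\otimes h)\circ f} \;=\; f\circ\ridm{f'} \;=\; f\circ\ridm{f} \;=\; f,
\]
so the partiality of the mediator $h$ never cuts down the domain of $f$. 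Rewriting $f'$ in terms of $f$ and invoking this identity collapses the two restriction idempotents onto the same morphism. This is precisely where the additional requirement in Definition~\ref{def:equiv} earns its keep, and I would treat it as the heart of the proof.

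Finally, for the identity and associativity laws the required mediators are the coherence isomorphisms themselves: the unitors $\rho,\lambda$ relate garbage objects $E\otimes I$ and $I\otimes E$ to $E$, and the associator relates $(E''\otimes E')\otimes E$ to $E''\otimes(E'\otimes E)$. Because these mediators are \emph{total} isomorphisms, the restriction-idempotent condition of $\triangleright$ is automatic (restriction idempotents are unchanged under composition with total maps), so only commutativity of the mediating triangles remains, and this follows from the triangle and pentagon axioms together with naturality. I would relegate these routine coherence chases to the appendix, as the paper does.
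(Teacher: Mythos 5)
Your proposal is correct and takes essentially the same approach as the paper's own proof: the heart of both arguments is the identity $(\id \otimes \ridm{h}) \circ f = f\circ\ridm{(\id\otimes h)\circ f} = f$ obtained from axiom (iv) together with $\ridm{f}=\ridm{f'}$, and both use the coherence isomorphisms as (total) mediators for the unit and associativity laws. The only difference is presentational: you reduce well-definedness to single $\triangleright$-steps via the generation principle, where the paper carries the full zigzag of mediators through a larger diagram and concludes by induction along it.
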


\begin{proposition}\label{prop:LbCrestriction}
  $\Lb(\C)$ inherits a restriction structure from $\C$ with $\ridm{[f,E]} 
  = [\rho^{-1} \circ \ridm{f}, I]$.
\end{proposition}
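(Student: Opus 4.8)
The plan is to verify, in order, that the assignment $\ridm{[f,E]} = [\rho^{-1}\circ\ridm{f},I]$ is well defined on $\sim$-classes, and then that it satisfies the four axioms of Definition~\ref{def:restrictioncategory}. Well-definedness is immediate: by Definition~\ref{def:equiv} the relation $\sim$ was \emph{defined} to force $\ridm{f} = \ridm{f'}$ whenever $f \sim f'$, so $\rho^{-1}\circ\ridm{f}$ and $\rho^{-1}\circ\ridm{f'}$ are literally equal morphisms of $\C$ and hence represent the same class. Throughout I would compute with chosen representatives, which is legitimate because $\Lb(\C)$ is already known to be a category (Proposition~\ref{prop:LbCcat}) with composition well defined on classes.

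The computations all rely on one recurring simplification, which I would isolate first: if $h \colon E \to E'$ is \emph{total}, then $[(\id \otimes h)\circ f, E'] = [f,E]$. Indeed $h$ is a mediator witnessing $f \triangleright (\id\otimes h)\circ f$, and since $\id\otimes h$ is total the two representatives have equal restriction idempotents (composing with a total map leaves $\ridm{\cdot}$ unchanged). In particular the unitors $\rho^{-1}_E$, $\lambda^{-1}_E$ and the associators, being total isomorphisms, may be absorbed into or stripped off the auxiliary object at will. Combined with naturality of $\rho$ and the coherence identities $\alpha_{B,E,I}\circ\rho^{-1}_{B\otimes E} = \id_B\otimes\rho^{-1}_E$ and $\alpha_{B,I,E}\circ(\rho^{-1}_B\otimes\id_E) = \id_B\otimes\lambda^{-1}_E$, this lets me reduce every composite appearing below to a single representative stripped of its redundant copies of $I$.

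With these tools, axioms (i)--(iii) each reduce to the corresponding axiom in $\C$. For (i), composing $[f,E]$ with $[\rho^{-1}\circ\ridm{f},I]$ simplifies to $[(\id\otimes\rho^{-1}_E)\circ f\circ\ridm{f}, E\otimes I]$, which by $f\circ\ridm{f}=f$ in $\C$ and absorption of the unitor equals $[f,E]$. For (ii) and (iii) I would first note that the restriction of a composite uses the totality of $\alpha$ to strip it, together with the monoidal property $\ridm{g\otimes\id_E} = \ridm{g}\otimes\id_E$; both sides then collapse to $[\rho^{-1}\circ(\ridm{f}\circ\ridm{g}),I]$ and $[\rho^{-1}\circ(\ridm{g}\circ\ridm{f}),I]$ respectively, and commutativity $\ridm{f}\circ\ridm{g} = \ridm{g}\circ\ridm{f}$ (for (ii)) respectively $\ridm{g\circ\ridm{f}}=\ridm{g}\circ\ridm{f}$ (for (iii)) in $\C$ finishes the job.

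The step needing genuine care, and which I expect to be the main obstacle, is axiom (iv). Writing $m = (\ridm{g}\otimes\id_E)\circ f$, the left-hand side $\ridm{[g,E']}\circ[f,E]$ simplifies, via the coherence identity for $\lambda$ above, to $[m,E]$, while the right-hand side $[f,E]\circ\ridm{[g,E']\circ[f,E]}$ requires first computing $\ridm{[g,E']\circ[f,E]}$. Using totality of $\alpha$, the monoidal property, and the Lemma's part $\ridm{g\circ f}=\ridm{\ridm{g}\circ f}$, this restriction is $[\rho^{-1}\circ\ridm{m},I]$, so the right-hand side reduces to $[f\circ\ridm{m},E]$. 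It then suffices to show $m = f\circ\ridm{m}$ already in $\C$, and this is exactly restriction axiom (iv) of $\C$ applied to the idempotent $h = \ridm{g}\otimes\id_E$: since $\ridm{h}=h$, we obtain $(\ridm{g}\otimes\id_E)\circ f = h\circ f = f\circ\ridm{h\circ f} = f\circ\ridm{m}$, the required equality on the nose.
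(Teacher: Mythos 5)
Your proposal is correct and follows essentially the same route as the paper's own proof: each restriction axiom of $\Lb(\C)$ is reduced to the corresponding axiom in $\C$, with unitors and coherence maps serving as mediators and their totality guaranteeing the required equality of restriction idempotents (your axiom (iv) step via $\ridm{\ridm{g}\otimes\id}=\ridm{g}\otimes\id$ is exactly the paper's closing computation). Your absorption lemma --- that a total map on the auxiliary object can be stripped off --- is precisely what the paper's explicit zig-zag diagrams instantiate case by case, so it is a tidier packaging of the same argument rather than a genuinely different one.
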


\begin{proposition}\label{prop:LbCmonoidal}
  If \C{} is a restriction symmetric monoidal category, then so is $\Lb(\C)$:
  \begin{itemize}
    \item the tensor unit and tensor product of objects are as in $\C$;
    \item the tensor product of $[f,E] \colon A \to B$ and $[f',E'] \colon A' \to B'$ is $[\vartheta \circ (f \otimes f'), E \otimes E'] \colon A \otimes A' \to B \otimes B'$;
  \end{itemize} 
  where $\vartheta$ is the canonical isomorphism $(B \otimes E) \otimes (B' \otimes E') \simeq (B \otimes B') \otimes (E \otimes E')$ in \C.
\end{proposition}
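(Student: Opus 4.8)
The plan is to verify, in turn, that the proposed tensor product is well-defined on $\sim$-equivalence classes, that it is a bifunctor, that the structural isomorphisms of $\C$ lift to $\Lb(\C)$ and satisfy the coherence conditions, and finally that the monoidal product respects the restriction structure exhibited in Proposition~\ref{prop:LbCrestriction}. Throughout, the guiding principle is that every structural morphism of $\C$ (associators, unitors, symmetries, and the reshuffling isomorphism $\vartheta$) is total and invertible, so that the mediators witnessing $\sim$ are always canonical isomorphisms on the ancilla systems, and the restriction-idempotent half of each $\triangleright$ will hold automatically.

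First I would establish well-definedness. Since $\sim$ is generated by $\triangleright$, it suffices to show $\triangleright$ is respected in each argument separately and then combine by transitivity, holding one factor fixed with identity mediators while the other moves along its zig-zag. So suppose $f \triangleright g$ with mediator $h$ and $f' \triangleright g'$ with mediator $h'$; the candidate mediator for the tensor is $h \otimes h'$, where I write $\vartheta,\vartheta'$ for the canonical isomorphisms attached to the two ancillae. The triangle commutes by naturality of $\vartheta$ in all four arguments, which lets one rewrite $(\id \otimes (h \otimes h')) \circ \vartheta$ as $\vartheta' \circ ((\id \otimes h) \otimes (\id \otimes h'))$ and then factor this through $f \otimes f'$ using the two defining triangles of $f \triangleright g$ and $f' \triangleright g'$. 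The restriction idempotents agree because $\vartheta$ is total, so $\ridm{\vartheta \circ (f \otimes f')} = \ridm{f \otimes f'} = \ridm{f} \otimes \ridm{f'}$ by the restriction law $\ridm{f \otimes f'} = \ridm{f} \otimes \ridm{f'}$ in $\C$, and $\ridm{f} = \ridm{g}$, $\ridm{f'} = \ridm{g'}$ by hypothesis.

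Next I would check bifunctoriality. Preservation of identities amounts to showing $[\vartheta \circ (\rho^{-1} \otimes \rho^{-1}), I \otimes I] = [\rho^{-1}, I]$, the mediator being the coherence isomorphism $\lambda \colon I \otimes I \to I$; the triangle $(\id \otimes \lambda) \circ \vartheta \circ (\rho^{-1} \otimes \rho^{-1}) = \rho^{-1}$ is a coherence computation in $\C$. The interchange law, that $([g,F] \otimes [g',F']) \circ ([f,E] \otimes [f',E'])$ equals $([g,F] \circ [f,E]) \otimes ([g',F'] \circ [f',E'])$, is the substantial step: both sides are represented by morphisms into $C \otimes C'$ whose ancilla objects are two different bracketings and orderings of $F \otimes F' \otimes E \otimes E'$, and the required mediator is the canonical reassociating-and-swapping isomorphism assembled from associators and symmetries of $\C$. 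I would write out the two composites explicitly using the ghost-arrow convention and verify the mediating triangle by Mac Lane coherence, the restriction idempotents matching automatically because all maps involved except $f,f',g,g'$ are total.

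Finally, the associator, unitors, and symmetry of $\Lb(\C)$ are the lifts $[\rho^{-1} \circ \alpha, I]$, $[\rho^{-1} \circ \lambda, I]$, $[\rho^{-1} \circ \rho, I]$, $[\rho^{-1} \circ \sigma, I]$ of the corresponding maps of $\C$, each carrying the trivial ancilla $I$; being invertible and total they have restriction idempotent $\id$, so their naturality and the pentagon, triangle, and hexagon identities descend from those of $\C$ modulo the same kind of canonical ancilla-side mediator. Compatibility with restriction, $\ridm{[f,E] \otimes [f',E']} = \ridm{[f,E]} \otimes \ridm{[f',E']}$, follows by evaluating both sides through Proposition~\ref{prop:LbCrestriction}: the left reduces to $[\rho^{-1} \circ (\ridm{f} \otimes \ridm{f'}), I]$ once $\vartheta$ is absorbed by totality, while the right is $[\vartheta \circ ((\rho^{-1} \circ \ridm{f}) \otimes (\rho^{-1} \circ \ridm{f'})), I \otimes I]$, and the two are $\sim$-related by the unitor $I \otimes I \to I$. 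The main obstacle is not conceptual but bookkeeping: after well-definedness, every step reduces to producing the correct canonical isomorphism on the ancillae as a mediator and checking one coherence triangle in $\C$, and the real work is keeping the reassociations of the ancilla systems straight — precisely what the ghost-arrow convention is designed to tame.
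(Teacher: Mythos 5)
Your proposal is correct and follows essentially the same route as the paper's own proof: well-definedness is checked one argument at a time with mediators of the form $h_i \otimes \id$, the restriction-idempotent conditions follow from totality of $\vartheta$ and $\ridm{f \otimes f'} = \ridm{f} \otimes \ridm{f'}$, coherence is lifted from $\C$ via $[\rho^{-1} \circ \beta, I]$ with canonical isomorphisms serving as mediators, and restriction compatibility is mediated by the unitor $I \otimes I \to I$. The only notable difference is that you explicitly verify bifunctoriality (preservation of identities and the interchange law), which the paper leaves implicit under its claim that coherence of $\Lb(\C)$ follows from that of $\C$ — a point in your favour rather than a divergence of method.
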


Having established that $\Lb(\C)$ is a restriction symmetric monoidal category, our next goal is to show that it is the restriction affine completion of $\C$. 
Again we proceed in steps. 
First we show that there is a strict monoidal functor $\C \to \Lb(\C)$.
Then we show that the unit in $\Lb(\C)$ is restriction terminal, so that the tensor product has total projections. 
From this we derive a factorisation theorem for morphisms in $\Lb(\C)$, which finally lets us institute $\Lb(\C)$ as the restriction affine completion of \C.

\begin{proposition}
  If $\C$ is a restriction symmetric monoidal category, there is a strict 
  monoidal restriction functor $\mathcal{E} \colon \C \to \Lb(\C)$ given by $\mathcal{E}(A)=A$ on objects and by $\mathcal{E}(f) = [\rho^{-1} \circ f, I]$ on morphisms.
\end{proposition}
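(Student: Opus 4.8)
The plan is to verify the three defining conditions of a strict monoidal restriction functor for $\mathcal{E}$: that it is a functor, that it is strict monoidal, and that it preserves restriction idempotents. Throughout, the key observation is that $\mathcal{E}(f) = [\rho^{-1} \circ f, I]$ equips $f$ with the trivial auxiliary system $E = I$, so all computations reduce to manipulating the coherence isomorphisms $\rho$, $\alpha$, $\lambda$ and using the fact that $\rho^{-1}$ is total (being invertible, by part (iii) of the cited lemma).

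First I would check functoriality. Preservation of identities is immediate: $\mathcal{E}(\id_A) = [\rho^{-1} \circ \id_A, I] = [\rho^{-1}, I]$, which is exactly the identity on $A$ in $\Lb(\C)$. For composition, I must compare $\mathcal{E}(g \circ f)$ with $\mathcal{E}(g) \circ \mathcal{E}(f)$. Unfolding the composition in $\Lb(\C)$, the latter is $[\alpha \circ (\rho^{-1} g \otimes \id_I) \circ \rho^{-1} f, I \otimes I]$, whereas the former is $[\rho^{-1} \circ g \circ f, I]$. These representatives sit over different auxiliary objects ($I \otimes I$ versus $I$), so I would exhibit the mediator $\lambda_I \colon I \otimes I \to I$ (or $\rho_I$, which agrees on $I$) and use naturality of $\rho$ together with the triangle coherence axiom to check the mediating triangle commutes; the restriction idempotents agree since both equal $\ridm{f}$ up to the total iso $\rho^{-1}$, invoking Proposition~\ref{prop:LbCrestriction}.

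Next I would verify strict monoidality on objects, which holds definitionally since $\mathcal{E}(A \otimes B) = A \otimes B = \mathcal{E}(A) \otimes \mathcal{E}(B)$ and $\mathcal{E}(I) = I$. For morphisms I must show $\mathcal{E}(f \otimes f') \sim \mathcal{E}(f) \otimes \mathcal{E}(f')$; using the tensor formula from Proposition~\ref{prop:LbCmonoidal}, the right-hand side carries auxiliary object $I \otimes I$ while the left carries $I$, so again a coherence-isomorphism mediator resolves the discrepancy, with $\vartheta$ and $\rho$ related through the monoidal coherence theorem. Finally, preservation of restriction follows by comparing $\mathcal{E}(\ridm{f}) = [\rho^{-1} \circ \ridm{f}, I]$ with $\ridm{\mathcal{E}(f)} = \ridm{[\rho^{-1} f, I]} = [\rho^{-1} \circ \ridm{\rho^{-1} f}, I]$, and observing that $\ridm{\rho^{-1} f} = \ridm{f}$ because $\rho^{-1}$ is total (part (ii) of the lemma).

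I expect no genuine obstacle here: the proposition is essentially bookkeeping, and the only mild subtlety is that strictness and functoriality require equalities of $\sim$-classes rather than of representatives, so each comparison must produce an explicit coherence mediator rather than a strict identity. The main care is to confirm that every such mediator is an isomorphism built from $\alpha$, $\lambda$, $\rho$ and hence total, guaranteeing the restriction-idempotent condition $\ridm{f} = \ridm{f'}$ in the definition of $\sim$ is met automatically. This is precisely where the additional restriction requirement distinguishing our $\Lb$ from Hermida and Tennent's construction must be checked, but it is satisfied for free because coherence isomorphisms are total.
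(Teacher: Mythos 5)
Your proposal is correct and follows essentially the same route as the paper: identities are immediate, composition and the tensor-of-morphisms equality are resolved by coherence-isomorphism mediators (the paper's zigzag uses exactly your $\rho_I \colon I \otimes I \to I$ paired with an identity mediator, which amounts to your single direct mediator), restriction preservation follows from totality of $\rho^{-1}$, and your closing observation that total mediators make the condition $\ridm{f} = \ridm{f'}$ automatic is valid. Two cosmetic points: the restriction idempotents compared in the composition step are $\ridm{g \circ f}$, not $\ridm{f}$; and for strictness the paper additionally records that the coherence isomorphisms of $\Lb(\C)$ are precisely the $\mathcal{E}$-images of those of $\C$, a point your write-up leaves implicit but which holds by definition of the monoidal structure on $\Lb(\C)$.
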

\begin{proof}
  To see $\mathcal{E}$ is functorial, compute $\mathcal{E}(\id) = [\rho^{-1} \circ
  \id, I] = [\rho^{-1}, I] = \id$. Composition is preserved because
  \begin{align*}
    \ridm{\mathcal{E}(g) \circ \mathcal{E}(f)} & = \ridm{\alpha \circ (\rho^{-1} \otimes \id) \circ 
    (g \otimes \id) \circ \rho^{-1} \circ f} = 
    \ridm{\alpha \circ (\rho^{-1} \otimes \id) \circ
    \rho^{-1} \circ g \circ f} = \ridm{g \circ f} \\
    & = \ridm{\rho^{-1} \circ g \circ f} = \ridm{\mathcal{E}(g \circ f)}
  \end{align*}
  and the diagram below commutes:
  \[\begin{tikzcd}
  	& B & {B \otimes I} & {C \otimes I} & {(C \otimes I) \otimes I} & {C \otimes (I \otimes I)} \\
  	A &&&&& {C \otimes I} \\
  	& B & C &&& {C \otimes I}
  	\arrow["f", from=2-1, to=1-2]
  	\arrow["\alpha", from=1-5, to=1-6]
  	\arrow["f"', from=2-1, to=3-2]
  	\arrow["g"', from=3-2, to=3-3]
  	\arrow["{\rho^{-1}}"', from=3-3, to=3-6]
  	\arrow["{\id \otimes \id}"', from=3-6, to=2-6]
  	\arrow["{\id \otimes \rho}", from=1-6, to=2-6]
  	\arrow["{F(g) \circ F(f)}"', color={rgb,255:red,128;green,128;blue,128}, squiggly, from=2-1, to=1-6]
  	\arrow["{F(g \circ f)}", color={rgb,255:red,128;green,128;blue,128}, squiggly, from=2-1, to=3-6]
  	\arrow["{\rho^{-1}}", from=1-2, to=1-3]
  	\arrow["{g \otimes \id}", from=1-3, to=1-4]
  	\arrow["{\rho^{-1} \otimes \id}", from=1-4, to=1-5]
  \end{tikzcd}\]
  The functor $\mathcal{E}$ preserves restriction idempotents: $\ridm{\mathcal{E}(f)} = [\rho^{-1} \circ \ridm{\rho^{-1} \circ f},I] = [\rho^{-1} \circ \ridm{f},I] = \mathcal{E}(\ridm{f})$. 
  That it is a strict monoidal functor follows from $\mathcal{E}(A \otimes B) = A \otimes B = \mathcal{E}(A) \otimes \mathcal{E}(B)$, $\mathcal{E}(I) = I$, $\mathcal{E}(f \otimes g) = \mathcal{E}(f) \otimes \mathcal{E}(g)$ (shown entirely analogously to showing $[\rho^{-1} \circ \beta,I] \otimes [\rho^{-1} \circ \phi,I] \sim [\rho^{-1} \circ (\beta \otimes \phi),I]$ for coherences $\beta$ and $\phi$ in Proposition~\ref{prop:LbCmonoidal}, see Appendix~\ref{sec:proofs}), and the fact that coherence isomorphisms in $\C$ are \emph{precisely} of the form $[\rho^{-1} \circ \beta, I] = \mathcal{E}(\beta)$ for each coherence isomorphism $\beta$ of $\C$.
\end{proof}

\begin{proposition}
  The tensor unit in $\Lb(\C)$ is restriction terminal.
\end{proposition}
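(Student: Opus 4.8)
The plan is to show that the tensor unit $I$ in $\Lb(\C)$ is restriction terminal: for every object $A$, there is a unique \emph{total} morphism $A \to I$ in $\Lb(\C)$. Unfolding the definitions, a morphism $A \to I$ in $\Lb(\C)$ is an equivalence class $[f, E]$ where $f \colon A \to I \otimes E$ in $\C$, and totality means $\ridm{[f,E]} = \id$. By Proposition~\ref{prop:LbCrestriction} this restriction idempotent is $[\rho^{-1} \circ \ridm{f}, I]$, and the identity on $A$ is $[\rho^{-1}, I]$, so totality of $[f,E]$ amounts to $\ridm{f} = \id$ in $\C$, i.e.\ $f$ is total in $\C$.

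First I would exhibit a canonical total morphism. The obvious candidate is $\mathcal{E}$ applied to a suitable total map, or more directly the class $[\lambda^{-1}, A] \colon A \to I$ obtained from the left unitor $\lambda^{-1} \colon A \to I \otimes A$ (treating $A$ itself as the auxiliary system), which is total since coherence isomorphisms are total by the lemma on invertible morphisms. This witnesses existence.

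Next, and this is the heart of the argument, I would prove uniqueness. Suppose $[f, E]$ and $[g, F]$ are both total morphisms $A \to I$, so $f \colon A \to I \otimes E$ and $g \colon A \to I \otimes F$ are total in $\C$ with $\ridm{f} = \id = \ridm{g}$. I must show $[f,E] = [g,F]$, i.e.\ $f \sim g$. The key observation is that when the codomain's `visible' part is the tensor unit $I$, the $B = I$ factor carries no information, so any two total maps into $I \otimes E$ and $I \otimes F$ should be connected by a zig-zag of mediators. Concretely, I would try to realise both $f$ and $g$ as mediated images of a common morphism, exploiting that $I \otimes E \cong E$ via $\lambda$. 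The cleanest route is to find mediators $h \colon E \to F$ (or a short zig-zag through an intermediate auxiliary object) with $(\id_I \otimes h) \circ f = g$; since the target visible component is $I$, one can take the intermediate object to be something like $E \otimes F$ and mediate $f$ up to $(\id \otimes (\lambda \circ g))$ and $g$ up to the symmetric choice, checking the restriction idempotents all agree (they are all $\id$ since $f, g$ are total). This produces the required $\sim$-equivalence.

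The main obstacle I anticipate is handling the mediators carefully: the relation $\triangleright$ is not symmetric (as the excerpt notes for \Isometry{} when $\dim E \neq \dim E'$), so I cannot expect a single mediator $E \to F$ in general and must instead build an explicit finite zig-zag as in Definition~\ref{def:equiv}, keeping track of the coherence isomorphisms $\lambda, \rho, \alpha$ and verifying each intermediate triangle commutes in $\C$ while all restriction idempotents equal $\id$. Following the paper's own convention, I would make the zig-zag path of mediators explicit rather than suppressing it, and draw the commuting diagrams in $\C$ with ghost arrows to indicate the corresponding $\Lb(\C)$-morphisms. Once the zig-zag is assembled, the conclusion $f \sim g$ is immediate, establishing uniqueness and hence that $I$ is restriction terminal.
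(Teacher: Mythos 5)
Your setup is correct as far as it goes: existence via the total morphism $[\lambda^{-1},A] \colon A \to I$, the unfolding of totality of $[f,E]$ as $\ridm{f} = \id$ in $\C$, and the recognition that uniqueness means connecting any two total morphisms into $I$ by a zig-zag of mediators. But the one step carrying all the mathematical content---actually exhibiting that zig-zag---is exactly the step you leave unresolved, and the concrete construction you sketch does not typecheck. You propose the intermediate auxiliary object $E \otimes F$ with mediator ``$\id \otimes (\lambda \circ g)$''; since $g \colon A \to I \otimes F$, the composite $\lambda \circ g$ has domain $A$, so $\id_E \otimes (\lambda \circ g)$ is a morphism $E \otimes A \to E \otimes F$, not a morphism out of $E$. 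A mediator $E \to E \otimes F$ of the shape you want would require a point $I \to F$ (or a copying map), neither of which exists in a general symmetric monoidal restriction category. So the required $\sim$-equivalence is never actually constructed, and your anticipated ``main obstacle'' is precisely where the proof still has a hole.

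The missing idea---and the trick the paper uses---is that a mediator may be the morphism $f$ \emph{itself}, via naturality of the unitor. Take as intermediate $f_1 = (\id_I \otimes \lambda_E^{-1}) \circ f \colon A \to I \otimes (I \otimes E)$, whose auxiliary object is $I \otimes E$. Then $\lambda_E^{-1} \colon E \to I \otimes E$ mediates $f \triangleright f_1$, and $f \colon A \to I \otimes E$, viewed as a mediator out of the auxiliary object $A$ of $\lambda_A^{-1} \colon A \to I \otimes A$, mediates $\lambda_A^{-1} \triangleright f_1$, because
\[
(\id_I \otimes f) \circ \lambda_A^{-1} \;=\; \lambda_{I \otimes E}^{-1} \circ f \;=\; (\id_I \otimes \lambda_E^{-1}) \circ f
\]
by naturality of $\lambda$ together with the coherence identity $\id_I \otimes \lambda_E = \lambda_{I \otimes E}$. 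All restriction idempotents involved equal $\id$ (totality of $f$, and unitors are isomorphisms), so every total $[f,E] \colon A \to I$ satisfies $[f,E] \sim [\lambda^{-1},A]$; your two-morphism formulation of uniqueness then follows by transitivity. Note that the zig-zag of length two is essential, as you suspected, but no choice of intermediate like $E \otimes F$ can work---the intermediate must have auxiliary object $I \otimes E$, and one of the two mediators must be $f$, since $A$ (the auxiliary object of $\lambda_A^{-1}$) is the only object out of which a suitable morphism is available.
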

\begin{proof}
  First note $I$ is weakly terminal: there is a morphism from each object $A$ into $I$, namely $[\lambda^{-1}, A]$.
  Furthermore, this morphism is total since $\ridm{[\lambda^{-1}, A]} = [\rho^{-1} \circ \ridm{\lambda^{-1}}, I] = [\rho^{-1} \circ \id, I]  = [\rho^{-1}, I] = \id$. 
  Because
  \[\begin{tikzcd}
  	&&& {I \otimes E} \\
  	A && {I \otimes E} & {I \otimes (I \otimes E)} \\
  	&&& {I \otimes A}
  	\arrow["f", curve={height=-12pt}, from=2-1, to=1-4]
  	\arrow["{\lambda^{-1}}"', curve={height=12pt}, from=2-1, to=3-4]
  	\arrow["{\id \otimes \lambda^{-1}}", from=1-4, to=2-4]
  	\arrow["{\id \otimes f}"', from=3-4, to=2-4]
  	\arrow["{\lambda^{-1}}", from=2-3, to=2-4]
  	\arrow["f", from=2-1, to=2-3]
  \end{tikzcd}\]
  any \emph{total} morphism $[f,E] \colon A \to I$ satisfies $[f,E] \sim [\lambda^{-1},A]$.
\end{proof}

We will simply write $!$ for the unique morphism $[\lambda^{-1},A] \colon A \to I$ from now on.

\begin{remark}
  An important property of restriction affine monoidal categories is that they have total maps $\pi_1 : A \otimes B \to A$ and $\pi_2 : A \otimes B \to B$. These can be defined as $A \otimes B 
  \tot{\id \otimes !} A \otimes I \tot{\rho} A$ and symmetrically, and are
  total since $\ridm{\rho \circ (\id \otimes !)} = \ridm{(\id \otimes !)} =
  \ridm{\id} \otimes \ridm{!} = \id \otimes \id = \id$, and similarly for 
  the second projection.
\end{remark}

These total projections are crucial in showing the following factorisation of morphisms in $\Lb(\C)$, based on Hermida and Tennent's \emph{expansion-raw morphism factorisation}~\cite[Lemma 2.8]{hermidatennent:indeterminates}.

\begin{lemma}\label{lem:factor}
  Every morphism $[f,E] : A \to B$ of $\Lb(\C)$ factors as $\pi_1 \circ \mathcal{E}(f)$. This factorisation is unique in the sense that if $[f,E] \sim \pi_1 \circ \mathcal{E}(f')$ for any $f'$, then $[f,E] \sim [f',E']$.
\end{lemma}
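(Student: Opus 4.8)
The plan is to treat existence and uniqueness separately, with the latter falling out of the former by transitivity of $\sim$. The linchpin is a concrete description of the first projection: I would first show that $\pi_1 = \rho \circ (\id \otimes !) \colon B \otimes E \to B$ equals $[\id_{B \otimes E}, E]$, the class of the $\C$-identity read as output $B$ with garbage $E$. This is a direct computation, feeding the tensor formula of Proposition~\ref{prop:LbCmonoidal} for $\id_B \otimes !$ into the composition formula against $\mathcal{E}(\rho)$. Every structural map that appears ($\vartheta$, $\alpha$, the unitors) is a coherence isomorphism, hence invertible and so total; by the Cockett--Lack lemma this keeps all the relevant restriction idempotents equal to $\id$, and the canonical isomorphism from the accumulated garbage onto $E$ serves as the mediator witnessing $\triangleright$.

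With $\pi_1 = [\id_{B \otimes E}, E]$ in hand, existence reduces to a single composition. Computing $\pi_1 \circ \mathcal{E}(f)$ from $\mathcal{E}(f) = [\rho^{-1} \circ f, I]$ via the composition formula yields $[\alpha \circ \rho^{-1} \circ f,\, E \otimes I] \colon A \to B$, with accumulated garbage $E \otimes I$. Since $\alpha \circ \rho^{-1}$ is an invertible coherence map, the Cockett--Lack lemma gives $\ridm{\alpha \circ \rho^{-1} \circ f} = \ridm{f}$, so the restriction idempotents agree. The right unitor $\rho \colon E \otimes I \to E$ then mediates: using the standard coherence identity $\rho_{B \otimes E} = (\id_B \otimes \rho) \circ \alpha$, the composite $(\id_B \otimes \rho) \circ \alpha \circ \rho^{-1}$ is the identity on $B \otimes E$, so $(\id_B \otimes \rho) \circ (\alpha \circ \rho^{-1} \circ f) = f$. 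This exhibits $[\alpha \circ \rho^{-1} \circ f,\, E \otimes I] \triangleright [f,E]$, hence $\pi_1 \circ \mathcal{E}(f) \sim [f,E]$, which is their equality in $\Lb(\C)$.

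For uniqueness, the existence clause applied to $f'$ already gives $\pi_1 \circ \mathcal{E}(f') \sim [f',E']$. If moreover $[f,E] \sim \pi_1 \circ \mathcal{E}(f')$, then transitivity of $\sim$ (Definition~\ref{def:equiv}) immediately yields $[f,E] \sim [f',E']$. I expect the only genuine work to lie in the coherence bookkeeping of the first paragraph: tracking how the auxiliary objects accumulate copies of $I$ under $\vartheta$, $\alpha$, and the unitors, and confirming in each case that the accumulated garbage is canonically isomorphic to $E$, so that one coherence isomorphism discharges the single $\triangleright$ obligation. Once this is arranged, both the equality of restriction idempotents (invertible maps are total) and the existence of mediators are supplied uniformly by MacLane coherence, and everything else is formula-chasing.
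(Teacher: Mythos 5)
Your proof is correct and takes essentially the same route as the paper's: both identify the composite $\pi_1 \circ \mathcal{E}(f)$ with the representative $\alpha \circ \rho^{-1} \circ f$ and garbage $E \otimes I$, match restriction idempotents via totality of the invertible coherence maps, use the unitor $\rho \colon E \otimes I \to E$ as the mediator witnessing $\pi_1 \circ \mathcal{E}(f) \sim [f,E]$, and dispatch uniqueness by applying existence to $f'$ and invoking transitivity of $\sim$. The only cosmetic difference is that you route the computation through the explicit identification $\pi_1 = [\id_{B \otimes E}, E]$, whereas the paper unfolds the whole composite in a single commutative diagram.
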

\begin{proof}
  Let $[f,E] : A \to B$ be a morphism of $\Lb(\C)$. First, 
  $\ridm{\pi_1 \circ \mathcal{E}(f)} = \ridm{\mathcal{E}(f)} =
  \mathcal{E}(\ridm{f}) = {[\rho^{-1} \circ \ridm{f}, I]} = \ridm{[f,E]}$.
  That $[f,E] \sim \pi_1 \circ \mathcal{E}(f)$ then follows by commutativity of 
  the diagram below.
  \[\begin{tikzcd}
  	& {B \otimes E} & {(B \otimes E) \otimes I} & {B \otimes (E \otimes I)} \\
  	A &&& {B \otimes E} \\
  	&&& {B \otimes E}
  	\arrow["f", from=2-1, to=1-2]
  	\arrow["{\rho^{-1}}", from=1-2, to=1-3]
  	\arrow["\alpha", from=1-3, to=1-4]
  	\arrow["f"', curve={height=12pt}, from=2-1, to=3-4]
  	\arrow["{\id \otimes \rho}", from=1-4, to=2-4]
  	\arrow["{\id \otimes \id}"', from=3-4, to=2-4]
  	\arrow["{[f,E]}", color={rgb,255:red,128;green,128;blue,128}, squiggly, from=2-1, to=3-4]
  	\arrow["{\pi_1 \circ \mathcal{E}(f)}"', color={rgb,255:red,128;green,128;blue,128}, squiggly, from=2-1, to=1-4]
  \end{tikzcd}\]
  Now suppose $[f,E] \sim \pi_1 \circ \mathcal{E}(f')$ for some
  $f' : A \to B \otimes E'$ in \C. Similarly as before, $[f',E'] \sim \pi_1 \circ \mathcal{E}(f')$, so it simply follows by transitivity that $[f,E] \sim \pi_1 \circ \mathcal{E}(f') \sim [f',E']$.
\end{proof}

We have finally arrived at the main theorem of this section.

\begin{theorem}\label{thm:lbc-free}
  $\Lb(\C)$ is the restriction affine completion of a restriction
  symmetric monoidal category $\C$: given any other restriction affine
  symmetric monoidal category $\D$ and strong monoidal restriction functor $F \colon \C \to \D$, there is a unique functor $\hat{F} \colon \Lb(\C) \to \D$ with $F = \hat{F} \circ \mathcal{E}$.
  \[\begin{tikzcd}
  	{\mathbf{C}} && {\Lb(\mathbf{C})} \\
  	&& {\mathbf{D}}
  	\arrow["{\mathcal{E}}", from=1-1, to=1-3]
  	\arrow["F"', from=1-1, to=2-3]
  	\arrow[dashed, "{\hat{F}}", from=1-3, to=2-3]
  \end{tikzcd}\]
\end{theorem}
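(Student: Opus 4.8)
The plan is to let Lemma~\ref{lem:factor} dictate the definition of $\hat{F}$. Since every morphism factors as $[f,E] = \pi_1 \circ \mathcal{E}(f)$ and any $\hat F$ satisfying $\hat F \circ \mathcal E = F$ must be functorial, the value $\hat F([f,E]) = \hat F(\pi_1) \circ F(f)$ is completely forced once $\hat F(\pi_1)$ is pinned down. In the affine category $\D$ there is a canonical projection $\pi_1^{\D} = \rho \circ (\id \otimes\, !)$, and, writing $\phi_{B,E} \colon F(B) \otimes F(E) \xrightarrow{\sim} F(B \otimes E)$ for the coherence isomorphism of the strong monoidal $F$, I would define
\[
  \hat{F}(A) = F(A), \qquad \hat{F}([f,E]) = \pi_1^{\D} \circ \phi_{B,E}^{-1} \circ F(f) \colon F(A) \to F(B).
\]
This is the only assignment compatible with the factorisation, which is what will later yield uniqueness.

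The main obstacle is showing this is \textbf{well-defined}, i.e.\ independent of the representative $f$. As $\sim$ is generated by $\triangleright$, it is enough to treat a single mediator: $f' = (\id \otimes h) \circ f$ with $\ridm f = \ridm{f'}$. Applying $F$ and using naturality of $\phi$, the value for $f'$ becomes $\pi_1^{\D} \circ (\id \otimes F(h)) \circ \phi_{B,E}^{-1} \circ F(f)$. The first key fact is that in an affine restriction category every morphism into the tensor unit equals $!$ precomposed with its own restriction idempotent; applied to $!_{F(E')} \circ F(h)$ this rewrites it as $!_{F(E)} \circ \ridm{F(h)}$, absorbing the mediator and turning the prefix into $\pi_1^{\D} \circ (\id \otimes \ridm{F(h)})$. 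Writing $e = \id \otimes \ridm{F(h)}$ (a restriction idempotent, by monoidality of the restriction) and $k = \phi_{B,E}^{-1} \circ F(f)$, it remains to prove $\pi_1^{\D} \circ e \circ k = \pi_1^{\D} \circ k$. Here the hypothesis $\ridm f = \ridm{f'}$ enters decisively: transported through the restriction functor $F$ and the naturality of $\phi$, it becomes exactly $\ridm{e \circ k} = \ridm k$ in $\D$, whereupon restriction axioms~(iv) and~(i) give $\pi_1^{\D} \circ e \circ k = \pi_1^{\D} \circ k \circ \ridm{e \circ k} = \pi_1^{\D} \circ k \circ \ridm k = \pi_1^{\D} \circ k$. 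This is precisely where the clause $\ridm f = \ridm{f'}$ that distinguishes our $\Lb$ from Hermida--Tennent is indispensable.

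With well-definedness secured the rest is bookkeeping. Functoriality: $\hat F$ preserves identities after checking $\hat F([\rho^{-1},I]) = \id$ via the unit coherence and $F(I) \cong I$; it preserves composition by expanding $[g,E'] \circ [f,E] = [\alpha \circ (g \otimes \id) \circ f,\, E' \otimes E]$, applying $F$, and using the coherence isomorphisms together with naturality of the projections in $\D$ to split the single projection off $E' \otimes E$ into the two successive ones. That $\hat F$ is a restriction functor is immediate from $\ridm{[f,E]} = [\rho^{-1} \circ \ridm f, I]$ and $F$ preserving restriction idempotents, and strong monoidality of $\hat F$ follows from that of $F$. The triangle $\hat F \circ \mathcal E = F$ reduces, for $\mathcal E(f) = [\rho^{-1} \circ f, I]$, to collapsing $\pi_1^{\D} \circ \phi_{B,I}^{-1} \circ F(\rho^{-1} \circ f)$ to $F(f)$, which is the unit-coherence triangle of $F$.

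Finally, uniqueness (among monoidal restriction functors, as the universal property intends): any such $\hat F$ with $\hat F \circ \mathcal E = F$ must send $!\colon E \to I$ to the unique total map into the restriction terminal unit of $\D$, hence $\hat F(\pi_1) = \pi_1^{\D}$; together with $\hat F(\mathcal E(f)) = F(f)$ and the factorisation $[f,E] = \pi_1 \circ \mathcal E(f)$ this forces $\hat F$ to coincide with the formula above on every morphism.
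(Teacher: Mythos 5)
Your construction of $\hat F$ is the same as the paper's: $F(A)$ on objects, $\pi_1^{\D} \circ \mathrm{str} \circ F(f)$ on morphisms (your $\phi^{-1}_{B,E}$ is the paper's strength), and your uniqueness argument is also the paper's, namely the factorisation $[f,E] \sim \pi_1 \circ \mathcal{E}(f)$ of Lemma~\ref{lem:factor} combined with the observation that any competing functor must send $!$ to $!$ and hence $\pi_1$ to $\pi_1^{\D}$ (like the paper, you read ``unique functor'' as ``unique strong monoidal restriction functor''; uniqueness among bare functors is not available, so this reading is the right one). Where you genuinely add something is the well-definedness check: the paper defines $\hat F$ on a representative $f$ and never verifies independence of the representative (nor functoriality), even though this is precisely the step where affineness of $\D$ and the clause $\ridm{f} = \ridm{f'}$ distinguishing $\Lb$ from Hermida--Tennent do their work. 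Your reduction to a single mediator, the absorption $! \circ F(h) = {!} \circ \ridm{F(h)}$, and the final computation with $e = \id \otimes \ridm{F(h)}$, $k = \phi^{-1}\circ F(f)$, $\ridm{e \circ k} = \ridm{k}$ and axioms (iv), (i) are all correct.

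One caveat deserves flagging. Your ``key fact'' --- that every morphism $g \colon X \to I$ into the unit equals ${!}_X \circ \ridm{g}$ --- is the standard Cockett--Lack notion of restriction terminal object, but it does \emph{not} follow from the weaker definition the paper actually states (mere existence of a unique total map into $I$). For a counterexample, take the commutative monoid $\{1,a,e\}$ with $a^2 = a$, $ae = a$, $e^2 = e$, $\ridm{a} = \ridm{e} = e$, $\ridm{1}=1$, viewed as a one-object symmetric monoidal restriction category: its unit is restriction terminal in the paper's weak sense, yet $a \neq {!} \circ \ridm{a} = e$. Worse, in this category $\mathcal{E}(a) = \mathcal{E}(e)$ in $\Lb$ (mediate by $a$ itself, since $ae=a$ and $\ridm{a}=\ridm{e}$), so for $F = \id$ no $\hat F$ with $\hat F \circ \mathcal{E} = F$ can exist at all: the theorem itself fails under the weak reading. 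So your appeal to the strong property is not a repairable omission but exactly the hypothesis the theorem needs; your proof is correct once ``restriction terminal'' is taken in the standard strong sense, and it exposes that the paper's stated definition (and its proof, which silently skips the only step needing that hypothesis) is too weak as written.
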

\begin{proof}
  Define $\hat{F} \colon \Lb(\C) \to \D$ by $\hat{F}(A) = F(A)$ on objects, on a morphism $[f,E] \colon A \to B$ by:
  \[
    \hat{F}(A) = F(A) \tot{F(f)} F(B \otimes E) \tot{\mathrm{str}} F(B) \otimes F(E) \tot{\pi_1} F(B) = \hat{F}(B) 
  \]
  where $F(B \otimes E) \tot{\mathrm{str}} F(B) \otimes F(E)$ is the monoidal 
  strength.
  This makes the diagram commute since $\hat{F}(\mathcal{E}(A)) = \hat{F}(A) = 
  F(A)$ on objects, and on morphisms
  \[
    \hat{F}(\mathcal{E}(f)) = \hat{F}([\rho^{-1} \circ f,I]) = \pi_1 \circ \mathrm{str} \circ F(\rho^{-1} \circ f) = \pi_1 \circ \mathrm{str} \circ F(\rho^{-1}) \circ F(f) = F(f) 
  \]
  because 
  \[
    \pi_1 \circ \mathrm{str} \circ F(\rho^{-1}) = 
    \rho \circ (\id \otimes {!}) \circ \mathrm{str} \circ F(\rho^{-1}) = \rho 
    \circ \rho^{-1} = \id
  \]
  by definition of $\pi_1$ and right unitality of the monoidal strength.
  The functor $\hat{F}$ is strong monoidal because $F$ is, since $\hat{F}(A \otimes B) = F(A \otimes B) \simeq F(A) \otimes F(B)$ and since all coherence isomorphisms $\Lb(\C)$ are of the form $\mathcal{E}(\beta)$ for a coherence isomorphism $\beta$ of $\C$, so that $\hat{F}(\beta) = \hat{F}(\mathcal{E}(\beta)) = F(\beta) = \beta$.
  Also, $\hat{F}$ is a restriction functor since $F$ is:
  $\ridm{\hat{F}([f,E])} = \ridm{\pi_1 \circ F(f)} = \ridm{F(f)} = F(\ridm{f}) = \hat{F}(\mathcal{E}(\ridm{f})) = \hat{F}([\rho^{-1} \circ \ridm{f},I]) = \hat{F}(\ridm{[f,E]})$.
  
  To see that $\hat{F}$ is unique, suppose $G \colon \Lb(\C) \to \D$ is a strong monoidal restriction functor making the triangle commute. 
  First, $\hat{F}$ and $G$ agree on objects as $G(A) = \hat{F}(\mathcal{E}(A)) = \hat{F}(A)$. 
  If $[f,E] \colon A \to B$ is a morphism of $\Lb(\C)$, then Lemma~\ref{lem:factor} guarantees $[f,E] \sim \pi_1 \circ \mathcal{E}(f)$, so:
  \[
    G([f,E]) = G(\pi_1 \circ
  \mathcal{E}(f)) = G(\pi_1) \circ G(\mathcal{E}(f)) = \pi_1 \circ F(f) =
  \hat{F}([f,E])
  \qedhere
  \]
\end{proof}

\section{Extensionality} 
\label{sec:extensionality}

Functional \emph{extensionality} means that two functions are equal if they return the same output on every input. This may not be the case in \emph{intensional} type theories.
This section concerns the second phase of our completion: the $\Ext$-construction. It quotients a given category by an equivalence relation related to well-pointedness to make it extensional, which we will show has a universal property. Combining this with the $\Lb$-construction of Section~\ref{sec:the-bar-l-construction}, the main results of this section will show that $\Ext(\Lb(\Isometry)) \simeq \CPTP$ and $\Ext(\Lb(\PInj)) \simeq \Pfn$. 

Say that a (restriction) category is \emph{pointed} if it has a (restriction) terminal object, and that it is \emph{(restriction) well-pointed} if additionally $f=g$ as soon as $f \circ a = g \circ a$ for all $a \colon 1 \to A$. Both \Pfn{} and \CPTP{} are restriction well-pointed.

\begin{definition}\label{def:cong}
  In a pointed restriction category, define a relation $\approx$ on parallel morphisms $f,g \colon A \to B$ by setting $f \approx g$ if and only if $f \circ a = g \circ a$ for all $a \colon 1 \to A$.
  Write $\Ext(\cat{C})$ for $\cat{C}\slash{\approx}$.
\end{definition}

\begin{lemma}
  The relation $\cdot \approx \cdot$ is a congruence, and so $\Ext(\cat{C})$ is a well-defined category.
\end{lemma}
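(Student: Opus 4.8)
The plan is to verify the two defining properties of a categorical congruence: that $\approx$ is an equivalence relation on each hom-set, and that it is compatible with composition on both sides. Since $\Ext(\C)$ is defined as the quotient $\C/{\approx}$, establishing these two facts immediately yields that composition descends to equivalence classes, so $\Ext(\C)$ is a well-defined category with the same objects as $\C$, hom-sets $\C(A,B)/{\approx}$, and composition and identities inherited from $\C$.

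First I would check that $\approx$ is an equivalence relation. For fixed $A,B$, the condition $f \approx g$ says $f \circ a = g \circ a$ for all points $a \colon 1 \to A$. Reflexivity, symmetry, and transitivity all follow pointwise from the corresponding properties of equality in $\C(1,B)$: for each point $a$, the morphisms $f \circ a$ and $g \circ a$ are genuine morphisms of $\C$, and equality there is already an equivalence relation, so quantifying over all $a$ preserves this. This step is routine.

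Next I would verify compatibility with composition, which is the substantive part. Suppose $f \approx g \colon A \to B$; I must show $h \circ f \approx h \circ g$ for any $h \colon B \to C$ (left compatibility) and $f \circ k \approx g \circ k$ for any $k \colon A' \to A$ (right compatibility). Left compatibility is immediate: for every point $a \colon 1 \to A$, we have $(h \circ f) \circ a = h \circ (f \circ a) = h \circ (g \circ a) = (h \circ g) \circ a$, using $f \circ a = g \circ a$ and associativity. Right compatibility is the main obstacle, because postcomposing $k$ changes the domain: to compare $f \circ k$ and $g \circ k$ I need equality after every point $b \colon 1 \to A'$, i.e. $f \circ (k \circ b) = g \circ (k \circ b)$. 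But $k \circ b \colon 1 \to A$ is itself a point of $A$, so the hypothesis $f \approx g$ applied to the point $a := k \circ b$ gives exactly $f \circ (k \circ b) = g \circ (k \circ b)$, hence $(f \circ k) \circ b = (g \circ k) \circ b$ for all $b$, as required. The key observation making this work is that points are closed under precomposition with arbitrary morphisms, so the universally-quantified condition transports correctly along $k$.

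Finally, since $\approx$ respects composition on both sides, the quotient composition $[h] \circ [f] := [h \circ f]$ is well-defined independently of representatives, and identities $[\id_A]$ serve as units because $\id_A$ does in $\C$; associativity and unit laws descend directly from those in $\C$. This establishes that $\Ext(\C)$ is a well-defined category. I would remark that the restriction structure is not needed for this lemma — only the presence of a terminal object $1$ supplying the points $a \colon 1 \to A$ — and defer interaction of $\approx$ with the restriction idempotents to the subsequent development.
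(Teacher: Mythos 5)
Your proof is correct and takes essentially the same approach as the paper: the crucial point in both is that points are closed under precomposition, so the hypothesis $g \approx g'$ can be applied to the point $f' \circ a \colon 1 \to B$ --- exactly your right-compatibility step. The paper merely packages your two one-sided compatibilities into a single simultaneous verification that $f \approx f'$ and $g \approx g'$ imply $g \circ f \approx g' \circ f'$, leaving the equivalence-relation axioms and the descent of associativity and units implicit.
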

\begin{proof}
  Suppose that $f,f' \colon A \to B$ and $g,g' \colon B \to C$ satisfy $f \approx f'$ and $g \approx g'$. Let $a \colon 1 \to A$. Then $f \circ a = f' \circ a$, and hence $g \circ f \circ a = g' \circ f' \circ a$. So $g \circ f \approx g' \circ f'$.
\end{proof}

The congruence $\approx$ also respects restriction structure: if $f,f' \colon A \to B$ satisfy $f \approx f'$, then also $\overline{f} \approx \overline{f'}$, by Definition~\ref{def:restrictioncategory}(iv), for if $a \colon 1 \to A$, then $\overline{f} \circ a = a \circ \overline{f \circ a} = a \circ \overline{f' \circ a} = \overline{f'} \circ a$.
Therefore $\Ext(\cat{C})$ is a well-defined restriction category, and the quotient functor $\cat{C} \to \Ext(\cat{C})$ sending a morphism to its equivalence class is a restriction functor.

However, it is not clear whether $\approx$ is a monoidal congruence when the category is affine monoidal. If $f \approx f' \colon A \to C$ and $g \approx g' \colon B \to D$, then $(f \otimes g) \circ x = (f' \otimes g') \circ x$ for all $x \colon 1 \to A \otimes B$ of the form $x=(a \otimes b) \circ \lambda_I^{-1}$ for $a \colon 1 \to A$ and $b \colon 1 \to B$. But what about entangled states $x \colon 1 \to A \otimes B$? Luckily, in the examples below this holds, so $\Ext(\cat{C})$ is again a well-defined monoidal category, and $\cat{C} \to \Ext(\cat{C})$ a strict monoidal functor.

By construction $\Ext(\C)$ is well-pointed, and the $\Ext$-construction is universal in accomplishing this.

\begin{definition}
  Call a functor $F \colon \cat{C} \to \cat{D}$ between pointed restriction categories \emph{full on points} if each $p \colon 1 \to F(A)$ in $\D$ is of the form $F(a)$ for some $a \colon 1 \to A$ in $\C$.
\end{definition}

\begin{theorem}\label{thm:well-pointed-completion}
  $\Ext(\C)$ is the \emph{well-pointed completion} of the pointed restriction category $\C$:
  given a well-pointed restriction category $\D$ and restriction
  functor $F \colon C \to D$ that is full on points, there is a unique restriction functor $\hat{F} \colon \Ext(\C) \to D$ that is full on points and makes the triangle commute:
  \[\begin{tikzcd}
  	{\mathbf{C}} && {\Ext(\mathbf{C})} \\
  	&& {\mathbf{D}}
  	\arrow[from=1-1, to=1-3]
  	\arrow["F"', from=1-1, to=2-3]
  	\arrow[dashed,"{\hat{F}}", from=1-3, to=2-3]
  \end{tikzcd}\]
\end{theorem}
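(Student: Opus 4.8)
The plan is to define $\hat{F}$ in the only way the commuting triangle permits: $\hat{F}(A) = F(A)$ on objects and $\hat{F}([f]) = F(f)$ on morphisms, where $[f]$ denotes the $\approx$-equivalence class of $f$. Let $Q \colon \C \to \Ext(\C)$ be the quotient functor, which is the identity on objects and sends each morphism to its class, hence is surjective on morphisms. The equation $F = \hat{F} \circ Q$ then leaves no freedom whatsoever: any candidate must satisfy $\hat{F}([f]) = \hat{F}(Q(f)) = F(f)$, so this assignment is simultaneously forced and unique (and uniqueness needs only commutativity together with surjectivity of $Q$, not the full-on-points hypothesis on $\hat{F}$). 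The real content is therefore to check that this assignment descends to equivalence classes, i.e. that it is well-defined.

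Well-definedness is the crux, and it is precisely where both hypotheses earn their keep. Suppose $f \approx g \colon A \to B$, meaning $f \circ a = g \circ a$ for every point $a \colon 1 \to A$ of $\C$; I want to conclude $F(f) = F(g)$ in $\D$. Since $\D$ is well-pointed, it suffices to show $F(f) \circ p = F(g) \circ p$ for every point $p \colon 1 \to F(A)$ of $\D$. Here fullness on points enters: each such $p$ has the form $F(a)$ for some $a \colon 1 \to A$ (using that $F$ carries the restriction-terminal object of $\C$ to that of $\D$, so that these composites type-check). Functoriality then gives $F(f) \circ p = F(f \circ a) = F(g \circ a) = F(g) \circ p$, as required. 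This single step is the one place where well-pointedness of the target and fullness on points of $F$ are both indispensable; I expect it to be the main obstacle, though the difficulty is conceptual (identifying why the hypotheses are exactly what is needed) rather than computational.

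The remaining verifications are routine bookkeeping. Functoriality of $\hat{F}$ holds because composition and identities in $\Ext(\C)$ are inherited from $\C$ and $F$ is a functor, so $\hat{F}([g]\circ[f]) = \hat{F}([g\circ f]) = F(g\circ f) = F(g)\circ F(f)$. That $\hat{F}$ is a restriction functor uses the restriction structure $\ridm{[f]} = [\ridm{f}]$ on $\Ext(\C)$ established above together with $F$ preserving restriction idempotents: $\hat{F}(\ridm{[f]}) = F(\ridm{f}) = \ridm{F(f)} = \ridm{\hat{F}([f])}$. Commutativity $F = \hat{F} \circ Q$ is immediate on both objects and morphisms. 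Finally, $\hat{F}$ is itself full on points: given $p \colon 1 \to \hat{F}(A) = F(A)$ in $\D$, fullness of $F$ yields $p = F(a) = \hat{F}([a])$ with $[a] \colon 1 \to A$ a point of $\Ext(\C)$, which completes the proof.
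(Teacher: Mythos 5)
Your proposal is correct and follows essentially the same route as the paper: the same forced definition $\hat{F}([f]) = F(f)$, the same well-definedness argument combining well-pointedness of $\D$ with fullness on points of $F$, and the same uniqueness argument from surjectivity of the quotient functor. The only differences are presentational refinements (noting that uniqueness does not need the full-on-points hypothesis on the candidate, and flagging that $F$ must send the point of $\C$ to that of $\D$ for the types to match), both of which are consistent with, and slightly sharper than, the paper's own write-up.
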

\begin{proof}
  Set $\hat{F}(A) = F(A)$ on objects and $\hat{F}([f]) = F(f)$ on morphisms. 
  To see that this is well-defined, suppose $f \approx g$, that is $f \circ a = g \circ a$ for all $a \colon 1 \to A$ in $\C$. 
  Then also $F(f) \circ p = F(g) \circ p$ for all $p \colon 1 \to F(A)$ in $\D$ because $F$ is full on points, and so $F(f) = F(g)$ since $\D$ is well-pointed. 

  Moreover, $\hat{F}$ is a restriction functor since $\ridm{\hat{F}([f])} = \ridm{F(f)} = F(\ridm{f}) = \hat{F}([\ridm{f}])$, and it is
  full on points since $F$ is.
  Now $\hat{F} \circ Q = F$ directly.
  It remains to show that $\hat{F}$ is the unique such functor. 
  Suppose $G \circ Q = F$ for a functor $G \colon \Ext(\C) \to \D$ that is full on points. But then $G(A) = F(A)$, and since $Q(f) = [f]$, we must also have $G([f]) = F(f) = \hat{F}([f])$.
\end{proof}

When the functor $\C \to \Ext(\C)$ is strict monoidal, as is the case
for both $\Isometry$ and $\PInj$, it completes restriction affine monoidal categories to restriction well-pointed monoidal categories. 

\section{Quantum channels and classical functions as completions}
\label{sec:examples}

This section instantiates the theory of the previous ones for our main examples.
The quantum case is quickly established thanks to Huot and Staton.

\begin{proposition}\label{prop:cptp}
  There is a monoidal equivalence $\Ext(\Lb(\Isometry)) \simeq \CPTP$.
\end{proposition}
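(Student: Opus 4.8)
The plan is to realise the equivalence through Stinespring dilation, leaning on Huot and Staton~\cite{huotstaton:universal,huotstaton:completion} for the two nontrivial facts about dilations and on Theorems~\ref{thm:lbc-free} and~\ref{thm:well-pointed-completion} for the bookkeeping. First I would define a strong monoidal restriction functor $\Phi \colon \Lb(\Isometry) \to \CPTP$ by $H \mapsto H$ on objects and, on a morphism $[V,E] \colon H \to K$ represented by an isometry $V \colon H \to K \otimes E$, by the channel $\rho \mapsto \operatorname{Tr}_E(V \rho V^\dagger)$. This lands in $\CPTP$ since $V^\dagger V = \id$ makes it trace preserving. Well-definedness on $\sim$-classes is the observation that an isometric mediator leaves the partial trace invariant: if $V' = (\id \otimes h) \circ V$ for an isometry $h \colon E \to E'$, then $\operatorname{Tr}_{E'}\!\big((\id \otimes h) V \rho V^\dagger (\id \otimes h^\dagger)\big) = \operatorname{Tr}_E(V \rho V^\dagger)$ because $h^\dagger h = \id_E$; as $\sim$ is generated by zig-zags of such mediators, $\Phi$ respects it. Functoriality and monoidality are then the routine statements that the dilation of a composite, respectively a tensor, is obtained by tracing out both environments, matching the composition and tensor of $\Lb(\Isometry)$.

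The engine of the argument is the correspondence between the points $1 \to H$ of $\Lb(\Isometry)$ and the density matrices on $H$. A point is an $\sim$-class $[a,E_a]$ of an isometry $a \colon \mathbb{C} \to H \otimes E_a$, that is, a unit vector $\psi_a \in H \otimes E_a$, which $\Phi$ sends to the state $\operatorname{Tr}_{E_a}\lvert\psi_a\rangle\langle\psi_a\rvert$. I would show this is a bijection onto the density matrices on $H$: surjectivity is the existence of a purification of every state, and injectivity up to $\sim$ is the uniqueness of purifications up to an isometry on the purifying system, which is precisely a mediator zig-zag through the minimal purification. A short computation shows that post-composing a point with $[V,E]$ and applying $\Phi$ agrees with applying $\Phi([V,E])$ to the corresponding state. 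Consequently $\Phi$ is full on points, so Theorem~\ref{thm:well-pointed-completion}, with $\CPTP$ restriction well-pointed, yields a unique restriction functor $\hat\Phi \colon \Ext(\Lb(\Isometry)) \to \CPTP$ through the quotient.

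It then remains to verify that $\hat\Phi$ is a monoidal equivalence. It is the identity on objects, hence essentially surjective, and it is full because every channel admits a Stinespring dilation~\cite{stinespring:positive}, so lies in the image of $\Phi$. For faithfulness, suppose $[V,E]$ and $[V',E']$ induce equal channels; then for every point $p$ the states $\Phi([V,E] \circ p)$ and $\Phi([V',E'] \circ p)$ coincide, so by injectivity of the point--state correspondence $[V,E] \circ p = [V',E'] \circ p$ for all $p$, i.e.\ $[V,E] \approx [V',E']$. The same correspondence settles the monoidal-congruence question raised in Section~\ref{sec:extensionality}: since the points of $H$ are \emph{all} density matrices rather than merely the pure states, $f \approx f'$ already forces equality of the two channels, and a tensor of equal channels agrees on every state of the composite system, entangled states included; thus $\approx$ is a monoidal congruence on $\Lb(\Isometry)$ and $\hat\Phi$ is strong monoidal.

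The step I expect to be the real obstacle is the injectivity of the point--state correspondence: establishing that two purifications of the same density matrix are joined by a zig-zag of isometric mediators, rather than by a single isometry, when their environments differ in dimension. This forces the zig-zag to pass through the minimal purification, and it is here that I would invoke the classification of dilations of~\cite{huotstaton:completion}; everything else reduces to the universal properties already established.
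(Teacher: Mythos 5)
Your proposal is correct, but it takes a genuinely different route from the paper's. The paper's proof is essentially a citation: because $\Isometry$ carries the trivial restriction structure, $\Lb(\Isometry)$ coincides with the Hermida--Tennent/Huot--Staton construction $L(\Isometry)$, so $\Lb(\Isometry) \simeq \CPTP$ is exactly~\cite[Corollary~7]{huotstaton:universal}; and since $\CPTP$ is already restriction well-pointed, the $\Ext$ step is an isomorphism, leaving only monoidality to check. You instead build the equivalence by hand: the partial-trace functor $\Phi$, the point--state correspondence, and the universal property of Theorem~\ref{thm:well-pointed-completion} to descend to $\hat\Phi$, with faithfulness routed through well-pointedness. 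Your route imports strictly less: only that every density matrix has a purification, and that any two purifications of the same state are reached from the minimal purification by isometries on the environment --- which is precisely a mediator zig-zag of length two, so the step you flag as the ``real obstacle'' is unproblematic and does not actually require the classification of~\cite{huotstaton:completion}. You also explicitly settle the monoidal-congruence question for entangled states that the paper only waves at (``luckily, in the examples below this holds''): in your argument $\approx$ is exactly the kernel of the monoidal functor $\Phi$, hence a monoidal congruence. What the paper's route buys is brevity, plus the strictly stronger fact that $\Lb(\Isometry) \simeq \CPTP$ holds \emph{before} applying $\Ext$ --- i.e.\ Stinespring dilation is already extensional, thanks to minimal dilations --- a point the paper leans on later when contrasting the quantum case with $\Lb(\PInj)$. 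Your argument leaves open whether $\Ext$ does any real work here (it does not), but correctly never needs to know.
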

\begin{proof}
  Since $\Isometry$ is a trivial restriction category, $\CPTP \simeq
  L(\Isometry) \simeq \Lb(\Isometry)$ by~\cite[Corollary~7]{huotstaton:universal}. 
  Also, $\CPTP$ is already well-pointed, so $\Ext(\CPTP) \simeq \CPTP$.
  It is easy to verify that the equivalence is monoidal.
\end{proof}

Above, the $\Ext$-phase was trivial, but this is not always the case. Consider the (intensional) category $\Lb(\PInj)$:
objects are sets, and morphisms $A \to B$ are partial injective functions $A \to B \times E$ that are identified $f \sim f'$ when there is a partial injective function $h \colon E \to E'$ such that $f(x) = (y,e)$ implies
$f'(x) = (y,h(e))$ for all $x \in A$. 
The environment $E$ is often thought of as the \emph{garbage} produced by the function because, being injective, it cannot actually discard any information. However, the $\Lb$-construction allows it to place instead the garbage off to the side, demarcating it from the desired output. In reversible computation, such garbage is unavoidable (since not all computable functions, and even not all interesting such, happen to be injective), so it is important that it is managed properly. 

Garbage is ideally extensional: we should be able to compare functions by looking only at their input-output behavior, even when some of it is designated as garbage. But unless you are careful, this might not be the case. 
Consider the successor function $n \mapsto n+1$ on natural numbers.
We can consider many different ways to vary the environment:
for example $f_1 \colon \mathbb{N} \to \mathbb{N} \times \{*\}$ given by $n \mapsto (n+1,*)$;
but also $f_2 \colon \mathbb{N} \to \mathbb{N} \times \mathbb{N}$ given by $n \mapsto (n+1,n)$. 
These two functions effect the exact same behaviour when disregarding garbage.
But they are in different equivalence classes as morphisms $\mathbb{N} \to \mathbb{N}$ in $\Lb(\PInj)$ because their garbage is so different.

How to mend this? First notice that points $1 \to A$ in $\Lb(\PInj)$ correspond to those in $\Pfn$ (see Lemma~\ref{lem:points-pfn} below). 
Even though $f_1$ and $f_2$ are different in $\Lb(\PInj)$, they do agree on each $n \colon 1 \to \mathbb{N}$: build the partial function $h_n \colon \mathbb{N} \to 1$ \emph{defined only on $n$} by $h_n(n) = *$; this mediates because $f_1(n) = (n+1,*) = (n+1,h_n(n))$ and $f_2(n) = (n+1,n)$. 
So, garbage is intensional in $\Lb(\PInj)$ because the category is not well-pointed.
Because \Pfn{} is well-pointed, it is necessary to identify morphisms when they agree on all points, which is exactly what the $\Ext$-construction does.

Why was this not an issue in the quantum case? There, extensionality arises from \emph{minimal} Stinespring dilations. Minimality gives a unique minimal (up to unitary) auxiliary system we can adjoin to realise any CPTP-map as conjugation by an isometry, thus taking away the choice of environment $E$ that sparked the trouble in $\Lb(\PInj)$.

\begin{lemma}\label{lem:points-pfn}
  The global points $1 \to A$ in $\Lb(\PInj)$ coincide with those in $\Pfn$.
\end{lemma}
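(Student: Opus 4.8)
The plan is to produce an explicit bijection between the points $1\to A$ of $\Lb(\PInj)$ and those of $\Pfn$. First I would unwind the definitions. A point of $\Lb(\PInj)$ is an equivalence class $[f,E]$ of a partial injection $f\colon 1\to A\otimes E$, where $1$ is a chosen singleton; since its domain is a singleton, such an $f$ is automatically injective and is completely determined by whether it is defined and, if so, by the single pair $(a,e)$ it returns. A point of $\Pfn$ (equivalently of $\PInj$, since every partial function out of a singleton is injective) is just a partial function $1\to A$, i.e.\ an element of $A$ or the nowhere-defined function. The natural candidate bijection is $\Phi([f,E]) = \pi_A\circ f$, post-composition with the set-theoretic first projection $\pi_A\colon A\otimes E\to A$; although $\pi_A$ is not itself a morphism of $\PInj$, the composite $\pi_A\circ f$ is a partial function out of a singleton and hence a legitimate point.

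Next I would check that $\Phi$ is well-defined on equivalence classes. A single mediating step $f\triangleright f'$ uses a mediator of the form $\id_A\otimes h$, which leaves the $A$-component untouched; hence $\pi_A\circ f = \pi_A\circ f'$, and the same holds along the entire zig-zag of Definition~\ref{def:equiv}, so $\Phi$ descends to $\sim$-classes. For injectivity I would analyse $\sim$ directly on points. The clause $\ridm{f}=\ridm{f'}$ forces $f$ and $f'$ to be simultaneously defined or simultaneously undefined: two undefined points are always equivalent, mediated by the empty partial injection $E\to E'$; two defined points $f(*)=(a,e)$ and $f'(*)=(a',e')$ are equivalent precisely when $a=a'$, the witnessing mediator being the partial injection $E\to E'$ defined only by $e\mapsto e'$. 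Thus $[f,E]$ is determined exactly by $\pi_A\circ f$, giving injectivity.

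For surjectivity I would send a point $g\colon 1\to A$ to $\mathcal{E}(g)=[\rho^{-1}\circ g, I]$ and compute $\Phi(\mathcal{E}(g)) = \pi_A\circ\rho^{-1}\circ g = g$, since $\pi_A\circ\rho^{-1}=\id_A$. Concluding, $\Phi$ is a bijection identifying the points of $\Lb(\PInj)$ with those of $\Pfn$.

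I expect the main obstacle to be the careful treatment of the undefined point together with the restriction-idempotent clause $\ridm{f}=\ridm{f'}$ in the definition of $\sim$: one must verify that this clause correctly separates the single undefined point from the defined ones while still permitting mediation by partial (non-total) injections, and one must be scrupulous that $\Phi$ is a genuine map of points despite $\pi_A$ failing to be a $\PInj$-morphism. Everything else is routine bookkeeping about partial injections out of a singleton.
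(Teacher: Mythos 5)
Your proof is correct and rests on the same key device as the paper's: one-point partial injections as mediators, with the restriction-idempotent clause $\ridm{f}=\ridm{f'}$ separating the defined point from the undefined one. The paper's proof only performs the normalization step --- every point $[x,E]$ with $x(*)=(a,e)$ is identified with one of the form $1 \to A \times 1$ via the mediator $* \mapsto e$, and the nowhere-defined mediator handles the undefined case --- leaving implicit both the identification of normalized points with partial functions $1 \to A$ and the fact that this identification is faithful. Your version fills that gap: by making the correspondence explicit as $\Phi([f,E])=\pi_A\circ f$ and observing that the $A$-component is invariant along any zig-zag of mediators (the clause $\ridm{f}=\ridm{f'}$ forces each mediator to be defined at the relevant element, so each step has the form $(a,e)\mapsto(a,h(e))$), you also establish injectivity, i.e.\ that distinct partial functions $1\to A$ yield inequivalent classes. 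So the route is essentially the paper's, with the added content being rigor rather than a new idea.
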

\begin{proof}
  Points in $\Lb(\PInj)$ are partial injective functions $x \colon 1 \to A \times E$ modulo identification. However, any such point can always 
  be identified with one of the form $y \colon 1 \to A \times 1$ since if $x(*) = (a,e)$ then the point $* \mapsto e$ mediates $1 \to E$ to witness $(x,E) \sim (y,1)$. If $E$ is the empty set, the nowhere defined function trivially mediates.
\end{proof}

It follows from the previous Lemma that the functor $\Lb(\PInj) \to \Ext(\Lb(\PInj))$ is full on points. So is $\Lb(\Isometry) \to \Ext(\Lb(\Isometry))$, but in a trivial way: because \ifarxiv\linebreak\fi$\Lb(\Isometry) \simeq \CPTP$ by~\cite{huotstaton:universal}, and $\CPTP$ is already well-pointed, this functor is an isomorphism of categories. 


\begin{proposition}\label{prop:bennett}
  There is a monoidal equivalence $\Ext(\Lb(\PInj)) \simeq \Pfn$.
\end{proposition}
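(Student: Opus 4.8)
The plan is to exhibit a \emph{garbage-forgetting} functor and show it descends to an equivalence after extensionalising. First I would define $U \colon \Lb(\PInj) \to \Pfn$ sending an object to itself and a morphism $[f,E] \colon A \to B$, represented by a partial injection $f \colon A \to B \times E$, to the partial function $\pi_1 \circ f \colon A \to B$ that discards the garbage coordinate. This is well-defined on $\sim$-classes: if $f \sim f'$ with mediator $h$, then $f(x) = (y,e)$ forces $f'(x) = (y,h(e))$, so the two share a first component, and the requirement $\ridm{f} = \ridm{f'}$ guarantees they share a domain; hence $\pi_1 \circ f = \pi_1 \circ f'$. One checks routinely that $U$ is a restriction functor (since $\pi_1$ is total, $\ridm{\pi_1 \circ f}$ is just the domain of $f$) and that it is strict monoidal for the cartesian-product tensor. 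By Lemma~\ref{lem:points-pfn} the points of $\Lb(\PInj)$ are exactly those of $\Pfn$, realised by $U$, so $U$ is full on points. Since $\Pfn$ is restriction well-pointed, Theorem~\ref{thm:well-pointed-completion} yields a unique full-on-points restriction functor $\hat{U} \colon \Ext(\Lb(\PInj)) \to \Pfn$ with $\hat{U} \circ Q = U$.

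Next I would show $\hat{U}$ is an equivalence by checking it is essentially surjective, full, and faithful. Essential surjectivity is immediate, as $\hat{U}$ is the identity on objects and every set occurs on both sides. Fullness is exactly Bennett's trick: given a partial function $p \colon A \to B$, put $E = A$ and define $f \colon A \to B \times A$ by $f(x) = (p(x), x)$ on the domain of $p$; recording the input in the second coordinate makes $f$ injective, and $\pi_1 \circ f = p$, so $\hat{U}(Q([f,A])) = p$. For faithfulness, suppose $\hat{U}(Q([f,E])) = \hat{U}(Q([g,E']))$, that is $\pi_1 \circ f = \pi_1 \circ g$ as partial functions $A \to B$. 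For any point $a \colon 1 \to A$, functoriality gives $U([f,E]) \circ U(a) = U([g,E']) \circ U(a)$, so $[f,E] \circ a$ and $[g,E'] \circ a$ are points of $\Lb(\PInj)$ with equal image under $U$; since $U$ is injective on points by Lemma~\ref{lem:points-pfn}, they are equal. As this holds for every $a$, we have $[f,E] \approx [g,E']$, hence $Q([f,E]) = Q([g,E'])$. Thus $\hat{U}$ is fully faithful and essentially surjective, so an equivalence.

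Finally I would upgrade this to a monoidal equivalence. With the cartesian product as tensor, $U$ is strict monoidal: unwinding the tensor of Proposition~\ref{prop:LbCmonoidal} and projecting gives $(x,x') \mapsto (\pi_1 f(x), \pi_1 f'(x'))$, which is precisely $U([f,E]) \otimes U([f',E'])$, and $\hat{U}$ inherits this. The only genuinely delicate point, flagged earlier for affine monoidal categories, is whether $\approx$ is a monoidal congruence, which hinges on \emph{entangled} states $1 \to A \times B$. Here the classical nature of $\PInj$ saves us: every point of $A \times B$ in $\Pfn$ (equivalently, by Lemma~\ref{lem:points-pfn}, in $\Lb(\PInj)$) is a plain pair $(a,b)$, hence a product of the points $a \colon 1 \to A$ and $b \colon 1 \to B$. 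Agreement on product states therefore already forces agreement on all states, so $f \approx f'$ and $g \approx g'$ give $f \otimes g \approx f' \otimes g'$, making $\Ext(\Lb(\PInj))$ monoidal and $\hat{U}$ a monoidal equivalence. I expect this separability of classical joint states --- the absence of entanglement --- to be the crux of the argument; by contrast, faithfulness and fullness are routine once Lemma~\ref{lem:points-pfn} is in hand.
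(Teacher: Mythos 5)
Your proposal is correct, but it runs in the opposite direction to the paper's proof, and the comparison is instructive. The paper constructs a functor $F \colon \Pfn \to \Ext(\Lb(\PInj))$ directly, sending $f$ to its Bennett embedding $[b_f,A]$ with $b_f(x) = (f(x),x)$; since the Bennett embedding is functorial only up to $\approx$, the paper must verify $F(\id) \approx \id$ and $F(g \circ f) \approx F(g) \circ F(f)$ with pointwise mediators, and then prove fullness and faithfulness by further mediator computations. You instead start from the garbage-forgetting functor $U \colon \Lb(\PInj) \to \Pfn$, $[f,E] \mapsto \pi_1 \circ f$, which is \emph{strictly} functorial, descend it through Theorem~\ref{thm:well-pointed-completion} to $\hat{U} \colon \Ext(\Lb(\PInj)) \to \Pfn$ (legitimate, since $\Pfn$ is well-pointed and $U$ is full on points by Lemma~\ref{lem:points-pfn}), and check that $\hat{U}$ is an equivalence. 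This redistributes the work nicely: functoriality is trivial, fullness becomes strict (Bennett's trick gives $\pi_1 \circ b_p = p$ on the nose, no $\approx$ needed), and all the mediator content is concentrated in Lemma~\ref{lem:points-pfn}, which carries your faithfulness argument. Two points deserve a sentence each in a final write-up. First, faithfulness needs slightly more than that lemma literally states: you need that the correspondence of points is implemented by $U$ and is \emph{injective}, i.e.\ that $\sim$-equivalent points have equal first components; this holds because mediators act only on the garbage component and $\sim$ preserves restriction idempotents, but it should be said. Second, in your monoidal-congruence argument a point $1 \to A \times B$ may also be nowhere defined rather than a pair $(a,b)$; this case is harmless, since the nowhere defined point is itself a tensor of nowhere defined points, but it should be mentioned. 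On the plus side, your explicit verification that $\approx$ is a monoidal congruence via separability of classical joint states addresses a step the paper waves through with ``luckily, in the examples below this holds'' and ``it is easy to verify that $F$ is monoidal.''
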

\begin{proof}
  Define $F \colon \Pfn \to \Ext(\Lb(\PInj))$ by $F(A) = A$ on objects, and on morphisms $f \colon A \to B$ by $F(f) = [b_f,A]$, where $b_f$ is the \emph{Bennett embedding} of $f$ given by $b_f(x) = (f(x),x)$. 
  
  We argue first that this is functorial: $F(\id)$ is $b_\id(x) = (x,x)$, but the chosen identity is (the equivalence class of) $\rho^{-1}(x) = (x,\star)$. However, on each point $p$, simply choose $p$ itself to
  mediate to see $b_\id \approx \rho^{-1}$. 
  Likewise, whereas $F(g \circ f)$ is $b_{g \circ f}(x) = (g(f(x)),x)$ and $F(g) \circ F(f)$ is $b'(x) = \big(g(f(x)), (f(x), x)\big)$, for each point $x$, mediate that point by $h_x \colon A \to B \times A$ given by:
  \[
    h_x(a) = \begin{cases}
      (f(x),x) & \text{if } a=x \\
      \text{undefined} & \text{otherwise}
    \end{cases}
  \]
  Thus $F(g \circ f) \approx F(g) \circ F(f)$. 
  Since $\Pfn$ and $\Ext(\Lb(\PInj))$ have the same objects, it remains only to be seen that $F$ is full and faithful. 

  For fullness, let a partial injective $f \colon A \to B \times E$ represent a morphism in $\Ext(\Lb(\PInj))$. 
  Since $[f,E]$ and $[f',E']$ are identified if and only if for all $x \in A$ there exists a partial injective function $h_x \colon E \to E'$ such that $f(x) = (y,e)$ implies $f'(x) = (y,h_x(e))$, either way $\pi_1 \circ f = \pi_1 \circ f'$ as partial functions.
  Consider now the Bennett embedding of $\pi_1 \circ f$, that is, the partial injective function $b_{\pi_1 \circ f} \colon A \to B \times A$ given by $x \mapsto (\pi_1(f(x)), x)$, and compare it to $f \colon A \to B \times E$.
  For any $x \in A$, it follows that if $f(x) = (y,e)$ then 
  $b_{\pi_1 \circ f}(x) = (\pi_1(f(x)), x) = (\pi_1(y,e),x) = (y,x)$, so the two agree in the first component. Define a one-point mediator $h_x \colon A \to E$ for $x$ given by:
  \[
    h_x(a) = \begin{cases}
      e & \text{if } a=x \\
      \text{undefined} & \text{otherwise}
    \end{cases}
  \]
  Thus $b_{\pi_1 \circ f} \approx f$ and $F$ is full.

  Towards faithfulness, suppose $F(f) \approx F(g)$, so $b_f \approx b_g$ for some $f,g \colon A \to B$. Thus $b_f(x) = (f(x),x)$
  for some partial function $f$, and similarly $b_g(x) = (g(x),x)$. That $b_f \approx b_g$ means that for each $a \in A$ there exists $h_a : A \to A$ (necessarily the identity) such that $b_f(a) = (y,a)$ implies 
  $b_g(a) = (y, h_a(a)) = (y,a)$. But since $y = f(a)$ by definition of $b_f$, and since the above holds for \emph{all} $a \in A$, it thus follows that $f(x) = g(x)$ for all $x \in A$, which in turn implies $f=g$ in \Pfn{} by extensionality. So $F$ is faithful.

  It is easy to verify that $F$ is monoidal.
\end{proof}

\begin{corollary}
  \CPTP{} is the restriction monoidal completion of \Isometry{} quotiented by
  well-pointedness, and \Pfn{} is the restriction monoidal completion of \PInj{}
  quotiented by well-pointedness.
\end{corollary}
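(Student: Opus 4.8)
The plan is to read this corollary as a straightforward assembly of the machinery already in place, so that the proof is a matter of identifying constructions rather than producing new arguments. First I would make the terminology precise: ``the restriction monoidal completion quotiented by well-pointedness'' denotes the composite construction $\Ext \circ \Lb$, where the inner $\Lb$-construction is the restriction affine completion by Theorem~\ref{thm:lbc-free}, and the outer $\Ext$-construction is the well-pointed completion by Theorem~\ref{thm:well-pointed-completion}. I would note that the two stack legitimately: since $\Lb(\C)$ is restriction affine, its tensor unit is restriction terminal, so $\Lb(\C)$ is pointed and hence a valid input to the $\Ext$-construction.

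With the composite construction fixed, I would invoke the two example equivalences directly. Proposition~\ref{prop:cptp} supplies a monoidal equivalence $\Ext(\Lb(\Isometry)) \simeq \CPTP$, and Proposition~\ref{prop:bennett} supplies a monoidal equivalence $\Ext(\Lb(\PInj)) \simeq \Pfn$. Because a completion is determined by its universal property only up to the appropriate (here, monoidal restriction) equivalence, transporting the universal property of $\Ext(\Lb(\Isometry))$ across its equivalence with $\CPTP$ shows that $\CPTP$ itself is the well-pointedness quotient of the restriction affine completion of $\Isometry$; the argument for $\Pfn$ and $\PInj$ is symmetric. This substitution is exactly the content of the two clauses of the corollary.

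The step requiring the most care is the legitimacy of composing the two universal properties, rather than merely composing the two constructions. To feed the functor $\hat{F} \colon \Lb(\C) \to \D$ induced by Theorem~\ref{thm:lbc-free} into Theorem~\ref{thm:well-pointed-completion}, one must verify that $\hat{F}$ is full on points; this is precisely where Lemma~\ref{lem:points-pfn}, identifying the points of $\Lb(\PInj)$ with those of $\Pfn$, together with the analogous (trivial) observation for $\Isometry$, carries the weight. I would also flag the monoidal congruence caveat raised just before Definition~\ref{def:cong}: the relation $\approx$ is only known to be a monoidal congruence once entangled states $x \colon 1 \to A \otimes B$ are accounted for, and it is the verification that both $\Isometry$ and $\PInj$ satisfy this that licenses treating $\Ext(\Lb(-))$ as a monoidal completion in the first place. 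Once these two checks are in hand, the corollary follows immediately from the named results.
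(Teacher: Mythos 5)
Your proposal is correct and matches the paper's proof, which is exactly to combine Theorems~\ref{thm:lbc-free} and~\ref{thm:well-pointed-completion} with Propositions~\ref{prop:cptp} and~\ref{prop:bennett}; your additional checks (pointedness of $\Lb(\C)$, fullness on points via Lemma~\ref{lem:points-pfn}, and the monoidal congruence caveat) are precisely the caveats the paper itself addresses in the surrounding text.
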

\begin{proof}
  Combine Theorems~\ref{thm:lbc-free} and~\ref{thm:well-pointed-completion} with Propositions~\ref{prop:cptp} and~\ref{prop:bennett}.
\end{proof}

\section{Cofree reversible foundations} 
\label{sec:cofree-reversible-foundations}

While \CPTP{} and \Pfn{} both arise as completions of `reversible' categories \Isometry{} and \PInj, it is difficult to pinpoint the  features which make them reversible. For example, \PInj{} is an inverse category, but \Isometry{} is not even a dagger category. Following~\cite{huotstaton:universal}, we peel off another layer to reveal the inverse category underneath using the $\Rb$-construction, the dual to $\Lb$. Thus we can show that both \CPTP{} and \Pfn{} arise via the same universal constructions on the inverse categories \Unitary{} and \PInj{}. We go on to show that this amalgamation of constructions is itself invertible by universal means, allowing us to reconstruct \PInj{} and \Unitary{} from \Pfn{} and \CPTP{} as their cofree (monoidal) inverse categories.


\begin{definition}
  For a symmetric monoidal inverse category $\C$, define $\Rb(\C) =
  \Lb(\C^\opp)^\opp$.
\end{definition}

\begin{proposition}
  When $\C$ is a symmetric monoidal inverse category, $\Rb(\C)$ is a
  coaffine symmetric monoidal restriction category.
\end{proposition}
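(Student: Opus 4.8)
The plan is to obtain every piece of structure on $\Rb(\C) = \Lb(\C^\opp)^\opp$ by dualising the corresponding structure on $\Lb(\C^\opp)$, which Section~\ref{sec:the-bar-l-construction} already analyses. First I would observe that $\C^\opp$ is again a symmetric monoidal inverse category: inverse categories are self-dual, with the dagger furnishing an identity-on-objects isomorphism $\C \cong \C^\opp$ that also transports the symmetric monoidal structure, and under which the restriction idempotent of a morphism of $\C^\opp$ corresponds to a range idempotent $\phi \circ \phi^\dagger$ in $\C$. In particular $\C^\opp$ is a symmetric monoidal restriction category, so Propositions~\ref{prop:LbCcat}, \ref{prop:LbCrestriction} and~\ref{prop:LbCmonoidal} apply and $\Lb(\C^\opp)$ is an affine symmetric monoidal restriction category.

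The symmetric monoidal structure then transfers for free: the opposite of a symmetric monoidal category is symmetric monoidal, with the same tensor of objects and with the tensor of morphisms and the coherence isomorphisms read backwards. Thus $\Rb(\C)$ is symmetric monoidal, and I would only check that this dualisation is compatible with the explicit tensor product of $\Lb(\C^\opp)$ recorded in Proposition~\ref{prop:LbCmonoidal}.

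The crux, and where I expect the real work, is the restriction structure (and, depending on it, coaffinity). Taking opposites does \emph{not} in general send a restriction category to a restriction category: the restriction idempotents of $\Lb(\C^\opp)$ dualise to idempotents sitting on codomains in $\Rb(\C)$, i.e.\ to a \emph{co}restriction, rather than to the domain idempotents a restriction structure requires. To produce genuine restriction idempotents I would instead dualise a \emph{range} structure on $\Lb(\C^\opp)$, which exists precisely because $\C^\opp$ is an inverse category: a representative $g \colon A \otimes E \to B$ in $\C$ of a morphism $[g,E] \colon A \to B$ of $\Rb(\C)$ has a domain idempotent $g^\dagger \circ g$ in $\C$, and using the total projections of the affine category $\Lb(\C^\opp)$ I would turn this into an idempotent on the object $A$ that serves as the domain of $[g,E]$ in $\Rb(\C)$ (equivalently, the codomain of the corresponding morphism of $\Lb(\C^\opp)$, where the range idempotents live). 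The main obstacle is then verifying that this candidate is independent of the chosen representative (that it descends along the zig-zags of mediators from Definition~\ref{def:equiv} and respects the constraint $\ridm{f} = \ridm{f'}$), that it satisfies the four axioms of Definition~\ref{def:restrictioncategory}, and that it obeys the bifunctor law $\ridm{f \otimes g} = \ridm{f} \otimes \ridm{g}$; I would carry these out as arrow-reversed analogues of the deferred computations of Section~\ref{sec:the-bar-l-construction}, making each mediator explicit. Finally, once the restriction structure is fixed, coaffinity follows by dualising the proof that the unit of $\Lb(\C^\opp)$ is restriction terminal: reversing arrows turns the unique total map $A \to I$ into a unique total map $I \to A$, exhibiting $I$ as restriction initial in $\Rb(\C)$.
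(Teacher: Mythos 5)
Your proposal follows essentially the same route as the paper's proof: reduce everything to the $\Lb$-results via self-duality of inverse categories, note that dualisation by itself only yields a \emph{corestriction} structure on $\Rb(\C)$, and then recover genuine restriction idempotents from the dagger — your domain idempotent $g^\dagger \circ g$ of an $\Rb$-side representative is exactly the paper's $\ridm{f^\dagger}$ in its formula $\ridm{[f,E]^\opp}=[\rho^{-1}\circ\ridm{f^\dagger},I]$. The paper's argument is just a compressed version of this plan (including your corestriction-versus-restriction observation, which it states explicitly), with the well-definedness and axiom checks you enumerate left implicit.
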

\begin{proof}
  Inverse categories are self-dual, $\C \simeq \C^\opp$, so $\Rb(\C) =
  \Lb(\C^\opp)^\opp \simeq \Lb(\C)^\opp$. Hence $\Lb(\C)$ is an
  affine symmetric monoidal restriction category, and $\Rb(\C)$ is a coaffine symmetric monoidal \emph{corestriction} category. It is also a symmetric monoidal restriction category under $\ridm{[f,E]^\opp} = [\rho^{-1} \circ \ridm{f^\dagger},I]$, because in an inverse category $\C$ morphisms $f$ have (monoidal) corestriction $\ridm{f^\dagger}$.
\end{proof}

The $\Lb$-construction (and, by duality, the $\Rb$-construction) is \emph{conservative}: if a monoidal category is already affine, the construction does nothing (up to isomorphism).

\begin{proposition}\label{prop:conservative}
  If $\C$ is a restriction affine symmetric monoidal category, there is a monoidal equivalence $\Lb(\C) \simeq \C$.
\end{proposition}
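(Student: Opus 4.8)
The plan is to exhibit the completion functor $\mathcal{E} \colon \C \to \Lb(\C)$ itself as the equivalence, and to produce its inverse straight from the universal property of Theorem~\ref{thm:lbc-free}. Since $\C$ is by hypothesis already a restriction affine symmetric monoidal category, the identity $\id_\C \colon \C \to \C$ is a strict, hence strong, monoidal restriction functor into such a category. Applying Theorem~\ref{thm:lbc-free} with $\D = \C$ and $F = \id_\C$ therefore yields a unique strong monoidal restriction functor $\widehat{\id} \colon \Lb(\C) \to \C$ with $\widehat{\id} \circ \mathcal{E} = \id_\C$; unwinding the formula from that proof (the strength of the identity functor is trivial) gives $\widehat{\id}([f,E]) = \pi_1 \circ f$.

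One composite is then the identity on the nose: $\widehat{\id} \circ \mathcal{E} = \id_\C$ by construction. For the other composite I would invoke uniqueness a second time. The functor $\mathcal{E} \circ \widehat{\id} \colon \Lb(\C) \to \Lb(\C)$ is a strong monoidal restriction functor, and it satisfies $(\mathcal{E} \circ \widehat{\id}) \circ \mathcal{E} = \mathcal{E} \circ (\widehat{\id} \circ \mathcal{E}) = \mathcal{E}$. But $\id_{\Lb(\C)}$ is another strong monoidal restriction functor with $\id_{\Lb(\C)} \circ \mathcal{E} = \mathcal{E}$. Since $\Lb(\C)$ is restriction affine and $\mathcal{E}$ is a strong monoidal restriction functor into it, the uniqueness clause of Theorem~\ref{thm:lbc-free}, applied now to $F = \mathcal{E}$, forces these to coincide: $\mathcal{E} \circ \widehat{\id} = \id_{\Lb(\C)}$.

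Thus $\mathcal{E}$ and $\widehat{\id}$ are mutually inverse, and as both are monoidal this is an isomorphism of monoidal categories, \emph{a fortiori} the claimed monoidal equivalence. The only point requiring care is that the uniqueness clause genuinely applies to both candidates over $\mathcal{E}$, i.e.\ that identities and composites of strong monoidal restriction functors are again strong monoidal restriction functors; this is routine. Should one prefer a hands-on argument avoiding the second appeal to the universal property, the crux becomes fullness of $\mathcal{E}$: affineness of $\C$ supplies a total projection $\pi_1 \colon B \otimes E \to B$ \emph{inside} $\C$, and the unique total map ${!} \colon E \to I$ mediates $f \triangleright \rho^{-1} \circ \pi_1 \circ f$ (using $\rho^{-1} \circ \pi_1 = \id \otimes {!}$ and $\ridm{\rho^{-1} \circ \pi_1 \circ f} = \ridm{f}$), so that every $[f,E]$ equals $\mathcal{E}(\pi_1 \circ f)$; faithfulness and bijectivity on objects are then immediate from $\widehat{\id} \circ \mathcal{E} = \id_\C$. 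Either route makes the monoidal equivalence fall out cleanly, with no genuine obstacle beyond bookkeeping.
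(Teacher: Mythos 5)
Your proof is correct, but your main argument takes a genuinely different route from the paper's. The paper argues concretely: since \C{} is affine it has a total projection $\pi_1 = \rho \circ (\id \otimes {!}) \colon B \otimes E \to B$ inside \C, and the unique total map ${!} \colon E \to I$ serves as mediator witnessing $[f,E] \sim \mathcal{E}(\pi_1 \circ f)$, the restriction idempotents agreeing because $\pi_1$ is total --- in other words, the paper proves exactly the ``hands-on'' fullness statement you sketch in your closing lines, stating that ``it suffices'' to show every morphism of $\Lb(\C)$ has the form $\mathcal{E}(f')$, and leaving the remaining bookkeeping (that this makes $\mathcal{E}$ and a quasi-inverse mutually inverse) implicit. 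Your primary route instead obtains the second composite $\mathcal{E} \circ \widehat{\id} = \id_{\Lb(\C)}$ by a second appeal to the universal property: since $\Lb(\C)$ is itself restriction affine and both $\id_{\Lb(\C)}$ and $\mathcal{E} \circ \widehat{\id}$ are strong monoidal restriction functors whose composite with $\mathcal{E}$ is $\mathcal{E}$, the uniqueness clause of Theorem~\ref{thm:lbc-free} applied to $F = \mathcal{E}$ identifies them. This is the standard idempotency-of-completions argument and it is sound here; the one point of care --- that uniqueness in Theorem~\ref{thm:lbc-free} is established only among strong monoidal restriction functors, so both candidates must be checked to be of this kind --- you flag explicitly, and it is indeed routine. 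What the abstract route buys is brevity and reusability: no diagram chase, it works verbatim for any construction with this universal property, and it even delivers an isomorphism of monoidal categories rather than a mere equivalence. What the paper's route buys is the explicit normal form $[f,E] = \mathcal{E}(\pi_1 \circ f)$, which makes the quasi-inverse concrete; note that since $\sim$-equivalence is equality in $\Lb(\C)$, the paper's computation in fact also yields an isomorphism, so the two proofs arrive at the same strength of conclusion by different means.
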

\begin{proof}
  It suffices to show that each morphism is equivalent to one of the form $\mathcal{E}(f')$. Let $[f,E] \colon A \to B$ be a morphism of $\Lb(\C)$.
  Then 
  $\ridm{\mathcal{E}(\pi_1 \circ f)} = \mathcal{E}(\ridm{\pi_1 \circ
  f}) = \mathcal{E}(\ridm{f}) = {[\rho^{-1} \circ \ridm{f}, I]} =
  \ridm{[f,E]}$ and:
  \[\begin{tikzcd}
  	&& {B \otimes I} \\
  	& {B \otimes E} & B & {B \otimes I} \\
  	A &&& {B \otimes I} \\
  	&&& {B \otimes E}
  	\arrow["f", from=3-1, to=2-2]
  	\arrow["{\pi_1}", from=2-2, to=2-3]
  	\arrow["{\rho^{-1}}", from=2-3, to=2-4]
  	\arrow["f"', curve={height=12pt}, from=3-1, to=4-4]
  	\arrow["{\id \otimes !}"', from=4-4, to=3-4]
  	\arrow["{\id \otimes \id}", from=2-4, to=3-4]
  	\arrow["{\mathcal{E}(\pi_1 \circ f)}"', color={rgb,255:red,128;green,128;blue,128}, squiggly, from=3-1, to=2-4]
  	\arrow["{[f,E]}", color={rgb,255:red,128;green,128;blue,128}, squiggly, from=3-1, to=4-4]
  	\arrow["{\id \otimes !}", from=2-2, to=1-3]
  	\arrow["\rho", from=1-3, to=2-3]
  	\arrow["\id", from=1-3, to=2-4]
  \end{tikzcd}\]
  So $\mathcal{E}(\pi_1 \circ f) \sim [f,E]$.
\end{proof}

We can now show that \Pfn{} and \CPTP{} arise as completions of
the inverse categories \PInj{} and \Unitary{}. The quantum case relies on
Huot and Staton's characterisation of \Isometry{} as a completion of
\Unitary{}~\cite{huotstaton:completion} making initial the \emph{unit of the direct sum}. We consider \PInj{} and \Unitary{} as inverse rig categories, using the $\Rb$-construction to make the unit of
the direct sum initial, and then the $\Lb$-construction to make the tensor unit terminal. In this bimonoidal setting, we will use subscripts to clarify which monoidal structure a construction acts on.

\begin{theorem}
  There are equivalences $\Ext(\LRb{\PInj}) \simeq \Pfn$ and $\Ext(\LRb{\Unitary}) \simeq \CPTP$ of categories.
\end{theorem}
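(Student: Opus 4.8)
The plan is to reduce both equivalences to the already-established Propositions~\ref{prop:cptp} and~\ref{prop:bennett} by identifying what the inner $\Rb_\oplus$-construction does to each input category. Concretely, I would first prove two auxiliary equivalences, $\Rb_\oplus(\PInj) \simeq \PInj$ and $\Rb_\oplus(\Unitary) \simeq \Isometry$, and then transport them through the outer $\Lb_\otimes$ and $\Ext$ constructions.

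For the classical case I would argue that $\PInj$ is \emph{already} coaffine for the direct sum: its $\oplus$-unit is the empty set $\emptyset$, which is corestriction initial, since the empty partial injection is the unique (cototal) morphism $\emptyset \to A$ for every $A$. By the dual of the conservativity result (Proposition~\ref{prop:conservative}), the coaffine completion $\Rb_\oplus$ therefore leaves $\PInj$ invariant, giving $\Rb_\oplus(\PInj) \simeq \PInj$. For the quantum case, the $\oplus$-unit (the zero-dimensional space) is \emph{not} initial in $\Unitary$, since no unitary maps it onto a nonzero space, so $\Rb_\oplus$ does genuine work here. I would invoke Huot and Staton's characterisation~\cite{huotstaton:completion} of $\Isometry$ as the universal completion of $\Unitary$ making the unit of the direct sum initial. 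Since $\Rb_\oplus$ is by construction (as the $\oplus$-dual of the affine completion $\Lb_\otimes$) precisely the coaffine completion freely adjoining an initial $\oplus$-unit, uniqueness of objects satisfying a universal property yields $\Rb_\oplus(\Unitary) \simeq \Isometry$.

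With these in hand I would chain the equivalences, using that $\Lb_\otimes$ and $\Ext$ each preserve equivalences (a consequence of their universal properties in Theorems~\ref{thm:lbc-free} and~\ref{thm:well-pointed-completion}). This gives
\[
  \Ext(\LRb{\PInj}) \simeq \Ext(\Lb(\PInj)) \simeq \Pfn
\]
by Proposition~\ref{prop:bennett}, and
\[
  \Ext(\LRb{\Unitary}) \simeq \Ext(\Lb(\Isometry)) \simeq \CPTP
\]
by Proposition~\ref{prop:cptp}.

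I expect the main obstacle to be the quantum identification $\Rb_\oplus(\Unitary) \simeq \Isometry$: one must match the exact universal property of the $\Rb$-construction against Huot and Staton's formulation, verifying that both are coaffine completions with respect to $\oplus$ and that no extra data (the ambient $\otimes$, or the restriction/corestriction structure carried across the self-duality $\C \simeq \C^\opp$) interferes. Care is needed because $\Unitary$ and $\PInj$ are \emph{rig} categories with two interacting monoidal structures, so I would confirm that $\Rb_\oplus$ taken with respect to $\oplus$ respects $\otimes$ well enough that the subsequent $\Lb_\otimes$ acts on the expected symmetric monoidal restriction category. A secondary point to check is that every equivalence above is monoidal for $\otimes$, which holds since each construction and each cited equivalence is monoidal.
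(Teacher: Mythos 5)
Your proposal is correct and follows essentially the same route as the paper: the paper likewise establishes $\Rb_\oplus(\PInj) \simeq \PInj$ by observing that the $\oplus$-unit $0$ is already (restriction) initial and dualising Proposition~\ref{prop:conservative}, cites Huot and Staton~\cite[III.3]{huotstaton:completion} for $\Rb_\oplus(\Unitary) \simeq \Isometry$, and then chains through Propositions~\ref{prop:cptp} and~\ref{prop:bennett}. The only difference is presentational: where you reconstruct $\Rb_\oplus(\Unitary) \simeq \Isometry$ by matching universal properties, the paper simply cites the result directly.
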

\begin{proof}
  First, that $\Ext(\LRb{\Unitary} \simeq \Ext(L_\otimes(R_\oplus(\Unitary))) \simeq \CPTP$ follows from the fact that $R_\oplus(\Unitary) \simeq \Isometry$ by~\cite[III.3]{huotstaton:completion} and Proposition~\ref{prop:cptp}.
  Now \ifarxiv\linebreak\fi$\Ext(\LRb{\PInj}) \simeq \Pfn$ follows from the unit $0$ of the disjoint sum $\oplus$ in $\PInj$ already being (restriction) 
  initial, so $\Rb_\oplus(\PInj) \simeq \PInj$ by dualising Proposition~\ref{prop:conservative} and finally \ifarxiv\linebreak\fi$\Ext(\LRb{\PInj}) \simeq \Ext(\Lb_\otimes(\PInj)) \simeq \Pfn$.
\end{proof}

Finally, we show that, at least in these two cases, this construction can be undone by considering their \emph{cofree inverse categories} (see Proposition~\ref{prop:cofreeinverse}).
Write $\Unitary_p$ for the category of finite-dimensional Hilbert spaces and equivalence classes of unitary linear maps up to global phase: unitaries $f,g \colon H \to K$ are identified when $f=z \cdot g$ for some $z \in U(1)$~\cite[2.1.4]{heunen:thesis}.

\begin{theorem}\label{thm:cofree}
  There are monoidal equivalences $\Inv(\Pfn) \simeq \PInj$ and $\Inv(\CPTP) \simeq \Unitary_p$.
\end{theorem}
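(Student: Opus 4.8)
The plan is to prove the two equivalences separately, since the classical case follows almost immediately from observations already recorded in Section~\ref{sec:categorical-background}, while the quantum case carries all of the content. In both cases $\Inv$ is the cofree (monoidal) inverse category of Proposition~\ref{prop:cofreeinverse}, so I only need to identify its morphisms concretely and check that the identification is monoidal.

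For $\Inv(\Pfn) \simeq \PInj$, recall that in $\Pfn$ (with its standard restriction structure) the partial isomorphisms are \emph{precisely} the partial injective functions. Hence the wide subcategory $\Inv(\Pfn)$ of all partial isomorphisms is literally $\PInj$, identity on objects. I would then note that both categories carry the same monoidal structure, the cartesian product of sets (which is a genuine categorical product only in the total, injective fragment), so the identification is in fact a monoidal isomorphism of categories, a fortiori a monoidal equivalence.

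For $\Inv(\CPTP) \simeq \Unitary_p$, the first observation is that $\CPTP$ carries the \emph{trivial} restriction structure, so $\ridm{f} = \id$ everywhere and a partial isomorphism is simply an isomorphism; thus $\Inv(\CPTP)$ is the core of $\CPTP$, the groupoid of invertible quantum channels. I would define a functor $\Unitary_p \to \Inv(\CPTP)$ that is the identity on objects and sends a phase class $[U]$ to the unitary conjugation $\mathrm{Ad}_U \colon \rho \mapsto U \rho U^\dagger$. This is well defined because $\mathrm{Ad}_{zU} = \mathrm{Ad}_U$ for $z \in U(1)$, and it is monoidal and functorial since $\mathrm{Ad}_{U \otimes V} = \mathrm{Ad}_U \otimes \mathrm{Ad}_V$ and $\mathrm{Ad}_{UV} = \mathrm{Ad}_U \circ \mathrm{Ad}_V$. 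Essential surjectivity on objects is immediate, so the equivalence reduces to fullness and faithfulness. Faithfulness is the easy direction: if $\mathrm{Ad}_U = \mathrm{Ad}_V$ then $V^\dagger U$ commutes with every operator, hence equals a scalar $z$ with $|z| = 1$, so $[U] = [V]$ in $\Unitary_p$.

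The substance, and the main obstacle, is fullness: every invertible channel must be shown to be of the form $\mathrm{Ad}_U$. My argument is that if $\Phi$ and its inverse are both CPTP, then $\Phi$ is an affine bijection of the compact convex state space of $\mathcal{B}(H)$ onto that of $\mathcal{B}(K)$, and an affine bijection sends extreme points to extreme points, so $\Phi$ restricts to a bijection of pure states (a pure $\rho$ with $\Phi(\rho)$ mixed would, after applying the inverse, exhibit $\rho$ itself as a nontrivial convex combination, a contradiction). By Kadison's theorem such an affine isomorphism of matrix-algebra state spaces is induced by a Jordan $*$-isomorphism $\mathcal{B}(H) \to \mathcal{B}(K)$, and every such map on a matrix algebra is either a $*$-isomorphism, necessarily $\mathrm{Ad}_U$ for a unitary $U$, or a $*$-anti-isomorphism, transposition followed by a conjugation; complete positivity of $\Phi$ rules out the transpose, leaving $\Phi = \mathrm{Ad}_U$. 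I expect this identification of invertible channels with unitary conjugations to be the only non-routine step, and I would lean on the Kadison--Wigner circle of results for the cleanest route, noting that the same conclusion can instead be obtained by a Kraus-operator argument, since two Kraus families presenting the identity channel differ by a unitary, which forces a single unitary Kraus operator once the decompositions of $\Phi$ and its inverse are composed.
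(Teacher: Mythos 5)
Your proposal is correct, and the classical half coincides with the paper's (which simply records that the partial isomorphisms in $\Pfn$ are precisely the partial injections), but for the quantum half you take a genuinely different route at the decisive step. Both you and the paper begin the same way: an invertible channel $\Lambda$ with CPTP inverse is an affine bijection of state spaces and therefore preserves purity, by exactly your extreme-point argument. From there the paper stays inside the quantum-information toolbox: since $\id \otimes \Lambda$ is also an isomorphism, it too preserves purity, so the Choi state $(\id \otimes \Lambda)(\dyad{\Phi}{\Phi})$ is pure; because a Stinespring dilation can be obtained by purifying the Choi state, the dilation is trivial, and $\Lambda$ is conjugation by an isometry, unitary by surjectivity. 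You instead invoke Kadison's theorem to upgrade the affine isomorphism of state spaces to a Jordan $*$-isomorphism, classify Jordan $*$-isomorphisms of matrix algebras as $*$-isomorphisms or $*$-anti-isomorphisms, and use complete positivity to exclude the transposition branch, leaving $\mathrm{Ad}_U$; your sketched Kraus-family argument is a second viable alternative. What each buys: the paper's argument is self-contained relative to material already in play (Choi--Jamiolkowski, purification) and never has to confront the anti-isomorphism case at all, whereas your route leans on heavier but entirely standard operator-algebraic machinery and makes that excluded case explicit. A further point in your favour on bookkeeping: you explicitly construct the functor $\Unitary_p \to \Inv(\CPTP)$, check well-definedness under global phase, faithfulness ($V^\dagger U$ central hence scalar), and monoidality ($\mathrm{Ad}_{U \otimes V} = \mathrm{Ad}_U \otimes \mathrm{Ad}_V$), all of which the paper leaves implicit even though the theorem asserts a \emph{monoidal} equivalence.
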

\begin{proof}
  That $\Inv(\Pfn) \simeq \PInj$ is well known; see for example~\cite{cockettlack:restcat1}. With $\CPTP$ a trivial restriction category, we 
  show that $\Unitary_p$ is its cofree groupoid. It suffices to
  show that isomorphisms in \CPTP{} just conjugate with a unitary.
  
  Let $\Lambda \colon \mathcal{B}(H) \to \mathcal{B}(K)$ be an isomorphism in \CPTP, that is, a bijective CPTP map with a CPTP inverse. 
  Notice first that since $\Lambda$ is bijective and $H$ and $K$ finite-dimensional, they must in fact have equal dimension. 
  Second, notice that $\Lambda$ must then preserve pure states, since if 
  $\Lambda(\dyad{\phi}{\phi})$ is some mixed state $\sum_i \alpha_i \rho_i$ then $\dyad{\phi}{\phi} = \Lambda^{-1}(\Lambda(\dyad{\phi}{\phi})) = 
  \Lambda^{-1}(\sum_i \alpha_i \rho_i) = \sum_i \alpha_i \Lambda^{-1}(\rho_i)$,
  contradicting purity of $\dyad{\phi}{\phi}$. But since 
  $\id \otimes \Lambda$ is then also an isomorphism, it too preserves pure 
  states, and so the Choi-state $(\id \otimes \Lambda)(\dyad{\Phi}{\Phi})$ for $\Lambda$ is pure, too. Recall that a Stinespring dilation of a CPTP map can be obtained by purifying its Choi-state, sending the result back through the Choi-Jamiolkowski isomorphism, and tracing out the auxiliary system~\cite{renner:lecturenotes}. Since the Choi-state $(\id \otimes \Lambda)(\dyad{\Phi}{\Phi})$ is already pure, $\Lambda$ must then already be conjugation by some isometry $V$, which must in fact be unitary by surjectivity of $\Lambda$.
\end{proof}

\noindent
\textbf{Acknowledgements}\quad We thank Frederik vom Ende for his clarifying comments on Theorem~\ref{thm:cofree}, Cole Comfort for pointing out related work, and Mathieu Huot for useful feedback.

\bibliographystyle{eptcs}
\bibliography{refs}

\appendix
\section{Deferred proofs}\label{sec:proofs}

\begin{repproposition}{prop:LbCcat}
  $\Lb(\C)$ is a category.
\end{repproposition}
\begin{proof}
  We need to show that composition is associative, unital, and well-defined.
  Let $[f,E] \colon A \to B$, $[g,E'] \colon B \to C$, and $[h,E''] \colon C \to D$ be morphisms of $\Lb(\C)$. That $[h, E''] \circ([g,E'] \circ [f,E])$ is equivalent to $([h, E''] \circ [g,E']) \circ [f,E]$ follows from
  \begin{align*}
    \ridm{\alpha \circ (h \otimes \id) \circ \alpha \circ g \otimes \id 
    \circ f} & =
    \ridm{\alpha \circ \alpha \circ ((h \otimes \id) \otimes \id) \circ g
    \otimes \id \circ f} \\
    & = \ridm{((h \otimes \id) \otimes \id) \circ g \otimes \id \circ f} \\
    & = \ridm{\alpha \circ (\alpha \otimes \id) \circ ((h \otimes \id) \otimes
    \id) \circ g \otimes \id \circ f}
  \end{align*}
  in \C{} and commutativity of the following diagram in $\C$:
  \[\begin{tikzcd}
    & {B \otimes E} & {(C \otimes E') \otimes E} & {C \otimes (E' \otimes E)} & {(D \otimes E'') \otimes (E' \otimes E)} \\
    &&&& {D \otimes (E'' \otimes (E' \otimes E))} \\
    A &&&& {D \otimes (E'' \otimes (E' \otimes E))} \\
    &&&& {D \otimes ((E'' \otimes E') \otimes E)} \\
    & {B \otimes E} & {(C \otimes E') \otimes E} & {((D \otimes E'') \otimes E') \otimes E} & {(D \otimes (E'' \otimes E')) \otimes E}
    \arrow["{g \otimes \id}"', from=5-2, to=5-3]
    \arrow["{(h \otimes \id) \otimes \id}"', from=5-3, to=5-4]
    \arrow["{\alpha \otimes \id}"', from=5-4, to=5-5]
    \arrow["\alpha"', from=5-5, to=4-5]
    \arrow["f"', from=3-1, to=5-2]
    \arrow["f", from=3-1, to=1-2]
    \arrow["{g \otimes \id}", from=1-2, to=1-3]
    \arrow["\alpha", from=1-3, to=1-4]
    \arrow["{h \otimes \id}", from=1-4, to=1-5]
    \arrow["\alpha", from=1-5, to=2-5]
    \arrow["{[h,E''] \circ ([g,E'] \circ [f,E])}", color={rgb,255:red,128;green,128;blue,128}, curve={height=-12pt}, squiggly, from=3-1, to=2-5]
    \arrow["{([h,E''] \circ [g,E']) \circ [f,E]}"', color={rgb,255:red,128;green,128;blue,128}, curve={height=12pt}, squiggly, from=3-1, to=4-5]
    \arrow["{\id \otimes \id}", from=2-5, to=3-5]
    \arrow["{\id \otimes \alpha}"', from=4-5, to=3-5]
  \end{tikzcd}\]

  That $\id \circ [f,E] \sim [f,E]$ follows from $\ridm{\alpha \circ (\rho^{-1} \otimes \id) \circ f} = \ridm{f}$ and commutativity in \C{} of the diagram:
  \[\begin{tikzcd}
    & {B \otimes E} & {(B \otimes I) \otimes E} & {B \otimes (I \otimes E)} \\
    A &&& {B \otimes E} \\
    &&& {B \otimes E}
    \arrow["f", curve={height=18pt}, from=2-1, to=3-4]
    \arrow["f", from=2-1, to=1-2]
    \arrow["{\rho^{-1} \otimes \id}", from=1-2, to=1-3]
    \arrow["\alpha", from=1-3, to=1-4]
    \arrow["{\id\circ [f,E]}"', color={rgb,255:red,128;green,128;blue,128}, squiggly, from=2-1, to=1-4]
    \arrow["{\id \otimes \lambda}", from=1-4, to=2-4]
    \arrow["{\id \otimes \id}"', from=3-4, to=2-4]
    \arrow["{[f,E]}", color={rgb,255:red,128;green,128;blue,128}, squiggly, from=2-1, to=3-4]
  \end{tikzcd}\]
  Similarly $[f,E] \circ \id \sim [f,E]$.
  Finally, we show that composition is well-defined. 
  Suppose $[f,E] \colon A \to B$ is equivalent to $[f',G] \colon A \to B$ by a zigzag of mediators $E \tot{h_1} E_1 \fromt{h_2} E_2 \tot{h_3} \cdots \fromt{h_n} G$.
  Given $[g,E'] \colon B \to C$ and intermediates $f_1, f_2, \dots, f_{n-1}$, to show $[g,E'] \circ [f,E] \sim [g,E'] \circ [f',G]$ we see first
  that the diagram below commutes in \C:
  \[\begin{tikzcd}
    && A \\
    {B \otimes E} & {B \otimes E_1} & {B \otimes E_2} & \dots && {B \otimes G} \\
    {(C \otimes E') \otimes E} & {(C \otimes E') \otimes E_1} & {(C \otimes E') \otimes E_2} & \dots && {(C \otimes E') \otimes G} \\
    {C \otimes (E' \otimes E)} & {C \otimes (E' \otimes E_1)} & {C \otimes (E' \otimes E_2)} & \dots && {C \otimes (E' \otimes G)}
    \arrow["f"', from=1-3, to=2-1]
    \arrow["{f'}", from=1-3, to=2-6]
    \arrow["\alpha"', from=3-1, to=4-1]
    \arrow["{g \otimes \id}"', from=2-1, to=3-1]
    \arrow["{g \otimes \id}", from=2-6, to=3-6]
    \arrow["\alpha", from=3-6, to=4-6]
    \arrow["{\id \otimes h_1}"', from=2-1, to=2-2]
    \arrow["{\id \otimes h_2}", from=2-3, to=2-2]
    \arrow["{\id \otimes h_n}", from=2-6, to=2-4]
    \arrow["{\id \otimes h_3}"', from=2-3, to=2-4]
    \arrow["{\id \otimes (\id \otimes h_1)}"', from=4-1, to=4-2]
    \arrow["{\id \otimes (\id \otimes h_2)}", from=4-3, to=4-2]
    \arrow["{\id \otimes (\id \otimes h_3)}"', from=4-3, to=4-4]
    \arrow["{\id \otimes (\id \otimes h_n)}", from=4-6, to=4-4]
    \arrow["{g \otimes \id}", from=2-2, to=3-2]
    \arrow["{g \otimes \id}", from=2-3, to=3-3]
    \arrow["\alpha", from=3-2, to=4-2]
    \arrow["\alpha", from=3-3, to=4-3]
    \arrow["{\id \otimes h_1}"', from=3-1, to=3-2]
    \arrow["{\id \otimes h_n}", from=3-6, to=3-4]
    \arrow["{\id \otimes h_3}"', from=3-3, to=3-4]
    \arrow["{\id \otimes h_2}", from=3-3, to=3-2]
    \arrow["{f_2 \quad \cdots}", from=1-3, to=2-3]
    \arrow["{f_1}", from=1-3, to=2-2]
  \end{tikzcd}\]
  There is no room in the diagram above for ghost arrows, but each downward 
  path $\alpha \circ (g \otimes \id) \circ f_i$
  corresponds to $[g, E'] \circ [f_i, E_i]$, and likewise for $[f,E]$ and
  $[f',G]$ instead of $f_i$.
  We have left to show that $\ridm{\alpha \circ (g \otimes \id) \circ f} = 
  \ridm{\alpha \circ (g \otimes \id) \circ f'}$. Now  $\ridm{\id \otimes h_1} \circ f = f \circ \ridm{(\id \otimes h_1) \circ f} = f \circ \ridm{f_1} = f \circ \ridm{f} = f$, so:
  \begin{align*}
    \ridm{\alpha \circ (g \otimes \id) \circ f} & =
    \ridm{(g \otimes \id) \circ f} = \ridm{(g \otimes \id) \circ
    \ridm{\id \otimes h_1} \circ f} = \ridm{(g \otimes \id) \circ
    (\id \otimes \ridm{h_1}) \circ f} \\ & =
    \ridm{(g \otimes \ridm{h_1}) \circ f} =
    \ridm{\ridm{g \otimes h_1} \circ f} =
    \ridm{(g \otimes h_1) \circ f} =
    \ridm{(g \otimes \id) \circ (\id \otimes h_1) \circ f} \\
    & = \ridm{(g \otimes \id) \circ f_1}
    = \ridm{\alpha \circ (g \otimes \id) \circ f_1}
  \end{align*}
  By induction eventually $\ridm{\alpha \circ (g \otimes \id) \circ f} = \ridm{\alpha \circ (g \otimes \id) \circ f'}$.
  
  Pre-composition is similarly well-defined, though the condition on restriction idempotents follows more readily by 
  $\ridm{(g \otimes \id) \circ f} = \ridm{\ridm{(g \otimes \id)} \circ f}
  = \ridm{\ridm{(g' \otimes \id)} \circ f} = \ridm{(g' \otimes \id) \circ f}$.
\end{proof}

\begin{repproposition}{prop:LbCrestriction}
  $\Lb(\C)$ inherits a restriction structure from $\C$ with $\ridm{[f,E]} 
  = [\rho^{-1} \circ \ridm{f}, I]$.
\end{repproposition}
\begin{proof}
  We establish the axioms of Definition~\ref{def:restrictioncategory} in order. 
  That $[f,E] \circ \ridm{[f,E]} = [f,E]$ for each $[f,E] \colon A \to B$ in $\Lb(\C)$ follows by commutativity of the following diagram in \C:
  \[\begin{tikzcd}
    && {B \otimes E} \\
    && A & {A \otimes I} & {(B \otimes E) \otimes I} & {B \otimes (E \otimes I)} \\
    A &&&&& {B \otimes E} \\
    &&&&& {B \otimes E}
    \arrow["f"', curve={height=18pt}, from=3-1, to=4-6]
    \arrow["{\ridm{f}}", from=3-1, to=2-3]
    \arrow["{\rho^{-1}}", from=2-3, to=2-4]
    \arrow["{f \otimes \id}", from=2-4, to=2-5]
    \arrow["\alpha", from=2-5, to=2-6]
    \arrow["{\id \otimes \rho}", from=2-6, to=3-6]
    \arrow["{\id \otimes \id}"', from=4-6, to=3-6]
    \arrow["{[f,E] \circ \ridm{[f,E]}}"', color={rgb,255:red,128;green,128;blue,128}, squiggly, from=3-1, to=2-6]
    \arrow["{[f,E]}", color={rgb,255:red,128;green,128;blue,128}, squiggly, from=3-1, to=4-6]
    \arrow["f", from=2-3, to=1-3]
    \arrow["{\rho^{-1}}", curve={height=-12pt}, from=1-3, to=2-5]
    \arrow["f", curve={height=-12pt}, from=3-1, to=1-3]
    \arrow["\rho"', from=2-5, to=3-6]
  \end{tikzcd}\]
  To see that $\ridm{[f,E]} \circ \ridm{[g,E']} = \ridm{[g,E']} \circ \ridm{[f,E]}$ for $[f,E] \colon A \to B$ and $[g,E'] \colon A \to C$ in $\Lb(\C)$:
  \[\begin{tikzcd}
    && A & {A \otimes I} & {A \otimes I} & {(A \otimes I) \otimes I} \\
    && A & A && {A \otimes (I \otimes I)} \\
    A &&&&& {A \otimes (I \otimes I)} \\
    && A & A && {A \otimes (I \otimes I)} \\
    && A & {A \otimes I} & {A \otimes I} & {(A \otimes I) \otimes I}
    \arrow["{\ridm{f}}", curve={height=-18pt}, from=3-1, to=1-3]
    \arrow["{\ridm{g}}"', curve={height=18pt}, from=3-1, to=5-3]
    \arrow["{\rho^{-1}}"', from=5-3, to=5-4]
    \arrow["{\ridm{f}}", from=5-3, to=4-4]
    \arrow["{\ridm{g}}"', from=1-3, to=2-4]
    \arrow["{\rho^{-1}}", from=4-4, to=5-5]
    \arrow["{\ridm{f} \otimes \id}"', from=5-4, to=5-5]
    \arrow["{\rho^{-1}}", from=1-3, to=1-4]
    \arrow["{\ridm{g} \otimes \id}", from=1-4, to=1-5]
    \arrow["{\rho^{-1}}"', from=2-4, to=1-5]
    \arrow["{\rho^{-1} \otimes \id}", from=1-5, to=1-6]
    \arrow["{\rho^{-1} \otimes \id}"', from=5-5, to=5-6]
    \arrow["\alpha"', from=5-6, to=4-6]
    \arrow["\alpha", from=1-6, to=2-6]
    \arrow["{\ridm{[g,E']} \circ \ridm{[f,E]}}"', color={rgb,255:red,128;green,128;blue,128}, squiggly, from=3-1, to=2-6]
    \arrow["{\ridm{[f,E]} \circ \ridm{[g,E']}}", color={rgb,255:red,128;green,128;blue,128}, squiggly, from=3-1, to=4-6]
    \arrow["{\id \otimes \id}"', from=4-6, to=3-6]
    \arrow["{\id \otimes \id}", from=2-6, to=3-6]
    \arrow["{\ridm{f}}"', from=3-1, to=4-3]
    \arrow["{\ridm{g}}", from=4-3, to=4-4]
    \arrow["{\ridm{g}}", from=3-1, to=2-3]
    \arrow["{\ridm{f}}"', from=2-3, to=2-4]
  \end{tikzcd}\]
  To show $\ridm{[g,E'] \circ \ridm{[f,E]}} = \ridm{[g,E']} \circ \ridm{[f,E]}$ for all $[f,E] \colon A \to B$ and $[g,E'] \colon A \to C$ of $\Lb(\C)$, first compute:
  \begin{align*}
    \ridm{[g,E'] \circ \ridm{[f,E]}} & = 
    \ridm{[g,E'] \circ (\rho^{-1} \circ \ridm{f},I)} =
    \ridm{(\alpha \circ (g \otimes \id) \circ \rho^{-1} \circ \ridm{f},I)} =
    \ridm{(\alpha \circ \rho^{-1} \circ g \circ \ridm{f},I)} \\
    & = (\rho^{-1} \circ \ridm{\alpha \circ \rho^{-1} \circ g \circ 
    \ridm{f}},I) = (\rho^{-1} \circ \ridm{g \circ \ridm{f}},I) 
  \end{align*}
  Now the diagram below commutes in $\C$ because $\ridm{g \circ \ridm{f}} = 
  \ridm{g} \circ \ridm{f}$:
  \[\begin{tikzcd}
    && A \\
    & A & {A \otimes I} & {A \otimes I} & {(A \otimes I) \otimes I} & {A \otimes (I \otimes I)} \\
    A &&&&& {A \otimes I} \\
    & A &&&& {A \otimes I}
    \arrow["{\ridm{f}}", from=3-1, to=2-2]
    \arrow["{\ridm{g \circ \ridm{f}}}"', from=3-1, to=4-2]
    \arrow["{\rho^{-1}}"', from=4-2, to=4-6]
    \arrow["{\rho^{-1}}", from=2-2, to=2-3]
    \arrow["{\ridm{g} \otimes \id}", from=2-3, to=2-4]
    \arrow["{\rho^{-1} \otimes \id}", from=2-4, to=2-5]
    \arrow["\alpha", from=2-5, to=2-6]
    \arrow["{\id \otimes \rho}", from=2-6, to=3-6]
    \arrow["{\id \otimes \id}"', from=4-6, to=3-6]
    \arrow["{\ridm{g}}", from=2-2, to=1-3]
    \arrow["{\rho^{-1}}", from=1-3, to=2-4]
    \arrow["{\ridm{[g,E']} \circ \ridm{[f,E]}}"', color={rgb,255:red,128;green,128;blue,128}, squiggly, from=3-1, to=2-6]
    \arrow["{\ridm{[g,E'] \circ \ridm{[f,E]}}}", color={rgb,255:red,128;green,128;blue,128}, squiggly, from=3-1, to=4-6]
  \end{tikzcd}\]
  Finally, for $[f,E] \colon A \to B$ and $[g,E'] \colon B \to C$ we have 
  $\ridm{[g,E']} \circ [f,E] = [f,E] \circ \ridm{[g,E'] \circ [f,E]}$ because $\ridm{[g,E'] \circ [f,E]} = [\ridm{\alpha \circ (g \otimes \id) \circ f},I] = [\ridm{(g \otimes \id) \circ f},I] = [\ridm{(g \otimes \id) \circ f},I]$ and the diagram below commutes:
  \[\begin{tikzcd}
    & {B \otimes E} && {B \otimes E} & {(B \otimes I) \otimes E} & {B \otimes (I \otimes E)} \\
    A &&&&& {B \otimes E} \\
    & A && {B \otimes E} & {(B \otimes E) \otimes I} & {B \otimes (E \otimes I)} \\
    && {} & {A \otimes I}
    \arrow["f", from=2-1, to=1-2]
    \arrow["\alpha", from=1-5, to=1-6]
    \arrow["{\ridm{(g \otimes \id) \circ f}}"', from=2-1, to=3-2]
    \arrow["{\rho^{-1}}"', from=3-2, to=4-4]
    \arrow["{f \otimes \id}"', from=4-4, to=3-5]
    \arrow["\alpha"', from=3-5, to=3-6]
    \arrow["{\id \otimes \lambda}", from=1-6, to=2-6]
    \arrow["{\id \otimes \rho}"', from=3-6, to=2-6]
    \arrow["f", from=3-2, to=3-4]
    \arrow["{\rho^{-1}}", from=3-4, to=3-5]
    \arrow["{\ridm{g} \otimes \id}", from=1-2, to=1-4]
    \arrow["{\rho^{-1} \otimes \id}", from=1-4, to=1-5]
    \arrow["{[f,E] \circ \ridm{[g,E'] \circ [f,E]}}", color={rgb,255:red,128;green,128;blue,128}, squiggly, from=2-1, to=3-6]
    \arrow["{\ridm{[g,E']} \circ [f,E]}"', color={rgb,255:red,125;green,125;blue,125}, squiggly, from=2-1, to=1-6]
  \end{tikzcd}\]
  Here $(\ridm{g} \otimes \id) \circ f = \ridm{g \otimes \id} \circ f = f \circ \ridm{(g \otimes \id) \circ f}$ by the corresponding axiom in \C.
\end{proof}

\begin{repproposition}{prop:LbCmonoidal}
  If \C{} is a restriction symmetric monoidal category, then so is $\Lb(\C)$:
  \begin{itemize}
    \item the tensor unit and tensor product of objects are as in $\C$;
    \item the tensor product of $[f,E] \colon A \to B$ and $[f',e'] \colon A' \to B'$ is $[\vartheta \circ (f \otimes f'), E \otimes E'] \colon A \otimes A' \to B \otimes B'$;
  \end{itemize} 
  where $\vartheta$ is the canonical isomorphism $(B \otimes E) \otimes (B' \otimes E') \simeq (B \otimes B') \otimes (E \otimes E')$ in \C.
\end{repproposition}
\begin{proof}
  Coherence isomorphisms $\rho \colon A \to B$ of $\C$ lift to $\Lb(\C)$ as $[\rho^{-1} \circ \beta, I] \colon A \to B$. For example, the symmetry $\gamma \colon A \otimes B \to B \otimes A$ in $\C$ becomes $[\rho^{-1} \circ \gamma,I] \colon A \otimes B \to B \otimes A$ in $\Lb(\C)$. Composing coherence isomorphisms $[\rho^{-1} \circ \beta,I] \colon A \to B$ and $[\rho^{-1} \circ \phi,I] \colon B \to C$ in $\Lb(\C)$ is equivalent to first composing them in $\C$ and then lifting to $\Lb(\C)$:
  \[\begin{tikzcd}[row sep=5mm]
    & B & {B \otimes I} & {C \otimes I} & {(C \otimes I) \otimes I} & {C \otimes (I \otimes I)} \\
    A &&&&& {C \otimes I} \\
    & B & C &&& {C \otimes I}
    \arrow["\beta", from=2-1, to=1-2]
    \arrow["{\rho^{-1}}", from=1-2, to=1-3]
    \arrow["{\phi \otimes \id}", from=1-3, to=1-4]
    \arrow["{\rho^{-1} \otimes \id}", from=1-4, to=1-5]
    \arrow["\alpha", from=1-5, to=1-6]
    \arrow["\beta"', from=2-1, to=3-2]
    \arrow["\phi"', from=3-2, to=3-3]
    \arrow["{\rho^{-1}}"', from=3-3, to=3-6]
    \arrow["{\id \otimes \rho}", from=1-6, to=2-6]
    \arrow["{\id \otimes \id}"', from=3-6, to=2-6]
    \arrow["{[\rho^{-1} \circ \phi \circ \beta, I]}", color={rgb,255:red,128;green,128;blue,128}, squiggly, from=2-1, to=3-6]
    \arrow["{[\rho^{-1} \circ \phi, I] \circ [\rho^{-1} \circ \beta, I]}"', color={rgb,255:red,128;green,128;blue,128}, squiggly, from=2-1, to=1-6]
  \end{tikzcd}\]
  Similarly, tensoring coherences $\beta$ and $\phi$ in $\C$ and then lifting is
  equivalent to first lifting them individually and then tensoring them in 
  $\Lb(\C)$ by
  \[\begin{tikzcd}
  	& {B \otimes B'} & {(B \otimes I) \otimes (B' \otimes I)} & {(B \otimes B') \otimes (I \otimes I)} \\
  	{A \otimes A'} &&& {(B \otimes B') \otimes I} \\
  	& {B \otimes B'} && {(B \otimes B') \otimes I}
  	\arrow["\beta\otimes\phi"', from=2-1, to=3-2]
  	\arrow["{\rho^{-1}}"', from=3-2, to=3-4]
  	\arrow["{\id \otimes \id}"', from=3-4, to=2-4]
  	\arrow["{\id \otimes \rho}", from=1-4, to=2-4]
  	\arrow["\beta\otimes\phi", from=2-1, to=1-2]
  	\arrow["{\rho^{-1} \otimes \rho^{-1}}", from=1-2, to=1-3]
  	\arrow["\vartheta", from=1-3, to=1-4]
  	\arrow["{[\rho^{-1} \circ \beta,I] \otimes [\rho^{-1} \circ \phi,I]}"', color={rgb,255:red,128;green,128;blue,128}, squiggly, from=2-1, to=1-4]
  	\arrow["{[\rho^{-1} \circ (\beta \otimes \phi),I]}", color={rgb,255:red,128;green,128;blue,128}, squiggly, from=2-1, to=3-4]
  \end{tikzcd}\]
  In this way, coherence of the monoidal structure in $\Lb(\C)$ follows
  from that of \C. It remains to show is that the tensor product of morphisms is well-defined, and that it respects restrictions.
  
  Suppose that $[f,E] \sim [g,G]$ via mediators $E \tot{h_1} E_1 \fromt{h_2} \dots \fromt{h_n} G$ and intermediates $f_1, \dots, f_{n-1}$ with $\ridm{f} = \ridm{f_1} = \cdots = \ridm{f_{n-1}} = \ridm{g}$. Then
  \[
    \ridm{\vartheta \circ (f \otimes f')} = \ridm{f \otimes f'}
    = \ridm{f} \otimes \ridm{f'} = \ridm{g} \otimes \ridm{f'}
    = \ridm{g \otimes f'} = \ridm{\vartheta \circ (g \otimes f')}
  \]
  since $\vartheta$ is an isomorphism (and so total). Also $[f \otimes
  f', E \otimes E'] \sim [g \otimes f', G \otimes E']$:
  \[\begin{tikzcd}
    & {A \otimes A'} \\
    {(B \otimes E) \otimes (B' \otimes E')} & {(B \otimes E_1) \otimes (B' \otimes E')} & \cdots & {(B \otimes B') \otimes (G \otimes E')} \\
    {(B \otimes B') \otimes (E \otimes E')} & {(B \otimes B') \otimes (E_1 \otimes E')} & \cdots & {(B \otimes B') \otimes (G \otimes E')}
    \arrow["\vartheta"', from=2-1, to=3-1]
    \arrow["{f \otimes f'}"', from=1-2, to=2-1]
    \arrow["\vartheta"', from=2-2, to=3-2]
    \arrow["\vartheta"', from=2-4, to=3-4]
    \arrow["{g \otimes f'}", from=1-2, to=2-4]
    \arrow["{\id \otimes (h_1 \otimes \id)}"', from=3-1, to=3-2]
    \arrow["{\id \otimes (h_n \otimes \id)}", from=3-4, to=3-3]
    \arrow["{(\id \otimes h_n) \otimes \id}", from=2-4, to=2-3]
    \arrow["{(\id \otimes h_1) \otimes \id}"', from=2-1, to=2-2]
    \arrow["{(\id \otimes h_2) \otimes \id}", from=2-3, to=2-2]
    \arrow["{\id \otimes (h_2 \otimes \id)}", from=3-3, to=3-2]
    \arrow["{f_1 \otimes f'\quad \cdots}", from=1-2, to=2-2]
  \end{tikzcd}\]
  Similarly $[f' \otimes f, E' \otimes E] \sim [f' \otimes g, E'
  \otimes G]$. Finally,
  \[
    \ridm{[f,E] \otimes [f',E']} = 
   [\rho^{-1} \circ \ridm{\vartheta \circ (f \otimes f')}, I] =
   [\rho^{-1} \circ \ridm{f \otimes f'}, I] = [\rho^{-1} \circ (\ridm{f}
   \otimes \ridm{f'}), I]
  \]
  and the diagram below commutes:
   \[\begin{tikzcd}[row sep=5mm]
    & {A \otimes A'} & {(A \otimes I) \otimes (A' \otimes I)} & {(A \otimes A') \otimes (I \otimes I)} \\
    {A \otimes A'} &&& {(A \otimes A') \otimes I} \\
    & {A \otimes A'} && {(A \otimes A') \otimes I}
    \arrow["{\ridm{f} \otimes \ridm{f'}}"', from=2-1, to=3-2]
    \arrow["{\rho^{-1}}"', from=3-2, to=3-4]
    \arrow["{\ridm{f} \otimes \ridm{f'}}", from=2-1, to=1-2]
    \arrow["{\rho^{-1} \otimes \rho^{-1}}", from=1-2, to=1-3]
    \arrow["\vartheta", from=1-3, to=1-4]
    \arrow["{\id \otimes \rho}", from=1-4, to=2-4]
    \arrow["{\id \otimes \id}"', from=3-4, to=2-4]
    \arrow["{\ridm{[f,E]} \otimes \ridm{[f',E']}}"', color={rgb,255:red,128;green,128;blue,128}, squiggly, from=2-1, to=1-4]
    \arrow["{\ridm{[f,E] \otimes [f',E']}}", color={rgb,255:red,128;green,128;blue,128}, squiggly, from=2-1, to=3-4]
   \end{tikzcd}\]
   This shows that $\ridm{[f,E] \otimes [f',E']} = \ridm{[f,E]} \otimes \ridm{[f',E']}$.
\end{proof}

\end{document}